\newcommand*{\tran}{^{\mkern-1.5mu\mathsf{T}}}
\newtheorem{definition}{Definition}
\newtheorem{remark}{Remark}
\newtheorem{lemma}{Lemma}
\begin{document}
\bstctlcite{IEEEexample:BSTcontrol}

\title{Robust Dynamic Mode Decomposition}

\author{
Amir~Hossein~Abolmasoumi,~Marcos~Netto,~\IEEEmembership{Member,~IEEE,}
and~Lamine~Mili,~\IEEEmembership{Life~Fellow,~IEEE}
\thanks{This work was authored in part by the National Renewable Energy Laboratory (NREL), operated by Alliance for Sustainable Energy, LLC, for the U.S. Department of Energy (DOE) under contract no. DE-AC36-08GO28308. This work was supported by the Laboratory Directed Research and Development (LDRD) program at NREL. The views expressed in the article do not necessarily represent the views of the DOE or the U.S. Government. The U.S. Government and the publisher, by accepting the article for publication, acknowledges that the U.S. Government retains a nonexclusive, paid-up, irrevocable, worldwide license to publish or reproduce the published form of this work, or allow others to do so, for U.S. Government purposes.}
\thanks{Amir H. Abolmasoumi is with the Electrical Engineering Department, Arak University, Arak, 38156879, Iran, and was a visiting scholar with the Bradley Department of Electrical and Computer Engineering, Virginia Polytechnic Institute and State University, Falls Church, VA 22043, USA. M. Netto is with the Power Systems Engineering Center, NREL, Golden, CO 80401, USA. L. Mili is with the Bradley Department of Electrical and Computer Engineering, Virginia Polytechnic Institute and State University, Falls Church, VA 22043, USA. Corresponding author: \href{mailto:a-abolmasoumi@araku.ac.ir}{a-abolmasoumi@araku.ac.ir}.}
}

% \markboth{IEEE TRANSACTIONS ON SIGNAL PROCESSING,~Vol.~, No.~, ~2021}{Shell \MakeLowercase{\textit{et al.}}: Bare Demo of IEEEtran.cls for Journals}

\maketitle

{\color{black}
\begin{abstract}
This paper develops a robust dynamic mode decomposition (RDMD) method endowed with statistical and numerical robustness. Statistical robustness ensures estimation efficiency at the Gaussian and non-Gaussian probability distributions, including heavy-tailed distributions. The proposed RDMD is statistically robust because the outliers in the data set are flagged via projection statistics and suppressed using a Schweppe-type Huber generalized maximum-likelihood estimator that minimizes a convex Huber cost function. The latter is solved using the iteratively reweighted least-squares algorithm that is known to exhibit a better convergence property and numerical stability than the Newton algorithms. Several numerical simulations using canonical models of dynamical systems demonstrate the excellent performance of the proposed RDMD method. The results reveal that it outperforms several other methods proposed in the literature.
\end{abstract}
}

\begin{IEEEkeywords}
Dynamic mode decomposition; Outlier detection; Robust estimation; Robust statistics; Robust regression.
\end{IEEEkeywords}

\IEEEpeerreviewmaketitle

\section{Introduction}
{\color{black}
\IEEEPARstart{T}{he} sustained growth of data acquisition across all areas of human activity is a crucial driver for the research and development of data science methods \cite{Hey2009, Brunton2019}. This fact especially applies to complex dynamical systems for which first-principles models are challenging to obtain while a large amount of data are available. 

A wealth of data science methods have been developed by researchers and made available to practitioners. Dynamic mode decomposition (DMD) stands out because of its connection with the Koopman operator theory \cite{Rowley2009}, which reconciles data analysis and the mathematical knowledge of dynamical systems; the reader is referred to \cite{Kutz2016, Mauroy2020} for more details. Since the publication of the paper authored by Schmid and Sesterhenn \cite{Schmid2008, Schmid2010}, DMD has become the mainstream method for data-driven modeling of dynamical systems, mainly applied to fluid mechanics \cite{Rowley2009}, electric power grids \cite{Susuki2011}, neuroscience \cite{Bingni2016}, finance \cite{Mann2016}, climate science \cite{Kutz2016b}, and transportation \cite{Avila2020}, to name a few.

The original DMD \cite{Schmid2008, Schmid2010} and most of its variants that are tailored to specific classes of dynamical systems make use of a least-squares estimator. Examples of these variants include but are not limited to multiresolution DMD \cite{Kutz2016b}, DMD with control \cite{Proctor2016}, Hankel DMD \cite{Brunton2017}, and tensor-based DMD \cite{Klus2018}. Section \ref{sec.DMD} briefly introduces the original DMD and makes explicit the least-squares estimator. The latter is of particular interest in this paper, as discussed next.

In the classic literature in robust statistics \cite{Huber1964, Wainer1976}, one defines \emph{robustness} as insensitivity to deviations from the assumptions. In this sense, the least-squares estimator is not robust. Two cases of deviations from the assumptions are of particular concern. The first case arises when the probability distribution of the observations is not Gaussian. The least-squares estimator quickly loses its statistical efficiency (that is, accuracy) when the tails of the probability distribution of the observations become slightly thicker than the Gaussian distribution or when the probability distribution of the observations becomes slightly asymmetric. The second case arises when the probability distribution of the majority of the observations is Gaussian except for a few observations, which may take arbitrary values. In this respect, one defines an \emph{outlier} as a data point that violates the underlying assumptions---in other words, it is a data point that is distant from the majority of the point cloud \cite{Rousseeuw2005}. The least-squares estimator produces strongly biased results in the presence of a single outlier in the data set \cite{Maronna2019}. Both cases of deviations from the assumptions often occur in practice---for example, when the probability distribution of the observations is not known while being assumed to be Gaussian or when outliers arise because of instrumentation and communications errors or a poor experimental setup. This fact precludes the DMD from being applied to practical settings, especially for control purposes where pre-cleaning the data set is not an option. 

It turns out that solving the sensitivity of DMD to deviations from the assumptions made about the data set is a challenging task \cite{Wu2021} due to the vulnerability of the least-squares estimator to non-Gaussian noise and outliers, which is a great concern to practitioners. This fact motivated several independent investigations to assess the accuracy of the DMD in capturing the underlying system dynamics directly from the data set \cite{Duke2012, Azencot2019, Zhang2020, Lu2020}. For instance, Dawson $et$  $al.$  \cite{Dawson2016} and Hemati $et$ $al.$ \cite{Hemati2017} address, respectively, the bias introduced by Gaussian noise and the bias resultant from asymmetrically processing snapshots. In Section \ref{sec.Results} of this paper, numerical experiments confirm that the DMD variant proposed by Hemati $et$ $al.$ \cite{Hemati2017} has excellent performance in the presence of Gaussian noise and has good performance in the presence of non-Gaussian but symmetrically distributed noise. This is achieved thanks to a reformulation of the DMD using a total least-squares estimator \cite{Markovsky2007}; however, this estimator is still vulnerable to outliers.

The vulnerability of the least-squares estimator to outliers is not directly solvable without data preprocessing; therefore, Askham \emph{et al.} \cite{Askham2017} reformulate DMD as an optimization problem and make use of a least trimmed squares (LTS) estimator, specifically the trimmed M-estimator introduced by Rousseeuw \cite{rousseeuw1985multivariate}. To the best of the authors' knowledge, \cite{Askham2017} is the only formulation of DMD that makes use of a robust estimator. In particular, the LTS estimator has a high breakdown point \cite{rousseeuw1985multivariate}---that is, this estimator is very robust from a statistical standpoint; however, the formulation in \cite{Askham2017} lacks a mechanism to identify outliers. Indeed, identifying outliers without access to a system model is challenging but necessary in DMD. The formulation in \cite{Askham2017} circumvents this challenge by making a blanket assumption that the time-series data can be represented ``by the outer product of a matrix of exponentials, representing Fourier-like time dynamics, and a matrix of coefficients, representing spatial structures.'' Consequently, nonexponential dynamics in the data set are, therefore, classified as outliers. This fact precludes the application of the method proposed in \cite{Askham2017} to dynamical systems that present nonexponential dynamics.

To this point, the discussion centers around the limitation of the DMD concerning statistical robustness; however, numerical robustness, also referred to as numerical sensitivity \cite{Chen2012}, is equally important. Numerical robustness has severe implications for the stability and convergence of numerical methods. Compared to the original DMD method \cite{Schmid2008, Schmid2010}, the DMD reformulation as an optimization problem proposed by Chen $et$ $al.$  \cite{Chen2012} is less numerically sensitive to deviations from the assumptions. Other researchers have also exploited this approach \cite{Sinha2020}. From a statistical standpoint, although these methods \cite{Chen2012, Sinha2020} provide superior numerical performance than the original DMD method, they are not robust to outliers; therefore, an alternative DMD method that is robust---from both the statistical and numerical standpoints---and is generally applicable to dynamical systems is of great interest to practitioners.

This paper develops an efficient numerical algorithm that makes DMD robust to outliers, even in a position of leverage. Leverage points are measurements whose projections on the factor space are outliers \cite{Rousseeuw2005, Mili1996}. Therefore, the proposed robust dynamic mode decomposition (RDMD) method extends the Schweppe-type Huber generalized maximum-likelihood estimator \cite{Gandhi2010} to \emph{matrix regression problems}. The RDMD method minimizes a convex Huber loss function that incorporates weights calculated via projection statistics. Thus, it can bound the influence of the outliers while maintaining good statistical efficiency at the Gaussian and thick-tailed distributions. Simulations revealed that the RDMD exhibits excellent performance on a collection of canonical models of dynamical systems, including the Van der Pol oscillator and a family of slow-manifold nonlinear systems, under various cases of deviation from the Gaussian assumption. Furthermore, it demonstrates high statistical efficiency under Gaussian and non-Gaussian probability distributions in addition to robustness to outliers. Finally, a numerical comparison against several other methods \cite{Hemati2017, Sinha2020, Askham2017} showcases the performance of the proposed RDMD method.

The paper proceeds as follows. Section II briefly establishes the connection between the Koopman operator and DMD; two numerical procedures for DMD \cite{Schmid2010, Tu2014} are outlined. Section III develops the proposed RDMD method, which is the main contribution of this paper. Section IV discusses the numerical results, and Section V concludes the paper.
}

\section{Preliminaries}

\subsection{Koopman Operator}

Consider an autonomous dynamical system evolving on a finite, $n$-dimensional manifold $\mathbb{X}$ given by

\begin{equation} \label{eq.1}
\bm{x}[k] = \bm{F}(\bm{x}[k-1]), \; \text{for discrete-time } k\in\mathbb{Z},
\end{equation}

\noindent
where $\bm{x}\in\mathbb{X}$ is the state, and $\bm{F}:\mathbb{X} \to \mathbb{X}$ is a nonlinear vector-valued map. Next, we introduce the Koopman operator for discrete-time dynamical systems.

Let $g(\bm{x})$ be a scalar-valued function defined in $\mathbb{X}$, such that $g:\mathbb{X}\to\mathbb{R}$. The function $g$ is referred to as the \emph{observable function}. Let the space of observable functions be $\mathcal{F}\subseteq{C}^{0}$, where ${C}^{0}$ denotes all continuous functions \cite{Mauroy2016}. The Koopman operator, $\mathcal{K}$, is a linear, infinite-dimensional operator \cite{Koopman1931} that acts on $g$ as follows:

\begin{equation}\label{eq.2x}
\mathcal{K}\,g := g \circ \bm{F},
\end{equation}

\noindent
where $\circ$ denotes the function composition. Formally, we have

\begin{equation}
\mathcal{K}g\big(\bm{x}[k]\big) = g\big(\bm{F}(\bm{x}[k-1])\big).
\end{equation}

{\color{black}The interpretation of (\ref{eq.2x}) is as follows. Instead of focusing on the evolution of the state, $\bm{x}$, one shifts the focus to the observables, $\bm{g}(\bm{x})$. The advantage is that the observables evolve linearly with time without neglecting the nonlinear dynamics of the underlying dynamical system given by (\ref{eq.1}).}

\subsection{Dynamic Mode Decomposition}\label{sec.DMD}

For simplicity of notation, define

\begin{equation} \label{eq.4}
\bm{y}_{k} := \bm{g}\big(\bm{x}[k]\big),
\end{equation}

\noindent
where $\bm{y}_{k}\in\mathbb{R}^{m}$ is a vector of $m$ measurements on (\ref{eq.1}) at time $k$. {\color{black}Note that, in principle, any measurement is a function of the state, $\bm{x}$.} In some applications, the measurement set is the state itself, such that $\bm{y}_{k}=\bm{x}_{k}$, and $m=n$. In this paper, we consider the more general case defined in (\ref{eq.4}), where $m\ne n$.

Suppose that one collects sampled measurements of (\ref{eq.1}) at time instances $k=\{0,1,...,N\}$. Define the data matrices as follows:

\begin{equation}\label{eq.5x}
\bm{Y}:=[\bm{y}_{0}\;\;\; \bm{y}_{1}\;\;\; ...\;\;\; \bm{y}_{N-1}], \quad \bm{Y}^{\prime}:=[\bm{y}_{1}\;\;\; \bm{y}_{2}\;\;\; ...\;\;\; \bm{y}_{N}],
\end{equation}

\noindent
where $\bm{Y},\,\bm{Y}^{\prime}\in\mathbb{R}^{m\times{N}}$. For a sufficiently large $N$, one gets \cite{Schmid2010}

\begin{equation} \label{eq.3x}
\bm{Y}^{\prime} \approx \bm{A}\cdot\bm{Y},
\end{equation}

\noindent
{\color{black}where $\bm{A}\in\mathbb{R}^{m\times{m}}$. Note that the operator $\bm{A}$ in (\ref{eq.3x}) pushes the measurement set one step forward in time. For this reason, $\bm{A}$ is a finite-dimensional approximation to the Koopman operator, $\mathcal{K}$. This connection to the Koopman operator gives the DMD method a theoretical support based on the mathematical knowledge of the dynamical systems. Let us now focus on the DMD method.} 

In the original derivation of the DMD \cite{Schmid2008, Schmid2010}, $\bm{A}$ takes the form of a companion matrix. As discussed in \cite{Schmid2010}, however, a practical implementation based on the companion matrix yields an ill-conditioned algorithm. Instead, the following procedure is suggested in \cite{Schmid2010}:

\begin{enumerate}
%1
\item First, compute a reduced singular value decomposition of $\bm{Y}$ as follows:
\begin{equation}
\bm{Y} = \bm{U}\bm{\Sigma}\bm{V}, 
\end{equation}
{\color{black}where $\bm{U}\in\mathbb{C}^{m\times{c}}$, $\bm{\Sigma}\in\mathbb{C}^{c\times{c}}$, $\bm{V}\in\mathbb{C}^{N\times{c}}$, and $c$ is the rank of $\bm{Y}$.}
%2
\item Then, compute
\begin{equation}\label{eq.8x}
\bm{\widetilde{A}} = \bm{U}\bm{Y}\bm{V}\bm{\Sigma}^{-1}.
\end{equation}
\end{enumerate}

\noindent
The numerical procedure goes on with an eigendecomposition of $\bm{\widetilde{A}}$ but, for this paper, the outlined steps suffice. As explained in \cite{Schmid2010}, (\ref{eq.8x}) amounts to a projection of the linear operator $\bm{A}$ onto a proper orthogonal decomposition basis. The method proposed in \cite{Schmid2010} is known as the standard DMD method.

In addition to the standard DMD method, the exact DMD proposed by Tu \emph{et al.} \cite{Tu2014} is widely used. It consists of the following steps:

\begin{enumerate}
%1
\item Let
\begin{equation} \label{eq.7}
\bm{Y}^{\prime} = \bm{A}\cdot\bm{Y}.
\end{equation}
%2
\item Post-multiply both sides of (\ref{eq.7}) by $\bm{Y}\tran$ to get
\begin{equation}
\bm{Y}^{\prime}\bm{Y}\tran = \bm{A}\bm{Y}\bm{Y}\tran.
\end{equation}
%3
\item Then, obtain an estimate of $\bm{A}$ as follows:
\begin{equation} \label{eq.9}
\bm{\widehat{A}} = \bm{Y}^{\prime}\bm{Y}\tran\left(\bm{Y}\bm{Y}\tran\right)^{-1}.
\end{equation}
\end{enumerate}

\noindent Note that $\bm{Y}\tran\left(\bm{Y}\bm{Y}\tran\right)^{-1}$ in (\ref{eq.9}) is the Moore-Penrose inverse $\bm{Y}^{\dagger}$ of a matrix $\bm{Y}$, and the matrix $\left(\bm{Y}\bm{Y}\tran\right)$ is invertible as long as $\bm{Y}$ has linearly independent rows. Given an estimate $\bm{\widehat{A}}$, it is straightforward to compute the approximations to the Koopman eigenvalues and Koopman modes \cite{Rowley2009}.

Now, let the residual (column) vector at time $k$ be defined as

\begin{equation}\label{eq.10x}
\bm{r}_{k} := \bm{y}_{k+1} - \bm{A}\bm{y}_{k},
\end{equation} 

\noindent
and let the $i$th element of $\bm{r}_{k}$ be given by

\begin{equation}
r_{k}^{[i]} = y_{k+1}^{[i]} - \bm{\mathsf{a}}_{i}\tran\bm{y}_{k} \label{eq.11x},
\end{equation}

\noindent
where $\bm{\mathsf{a}}_{i}\tran$ denotes the $i$th row of $\bm{A}$.  It can be shown that (\ref{eq.9}) is the solution to a classic linear least-squares regression problem---that is,

\begin{equation}
\text{minimize} \qquad \sum_{k=0}^{N-1}\sum_{i=1}^{m}\rho_{\ell{s}}\left(r_{k}^{[i]}\right) \label{eq.4x},
\end{equation}

\noindent
where the least-squares loss function $\rho_{\ell{s}}\left(r_{k}^{[i]}\right)=\frac{1}{2}\left(r_{k}^{[i]}\right)^{2}$.

\begin{lemma}
The estimate $\bm{\widehat{A}}$ in (\ref{eq.9}) is the least-squares solution to (\ref{eq.3x}).
\end{lemma}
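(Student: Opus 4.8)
The plan is to rewrite the scalar double sum in the objective as a single matrix Frobenius norm, then minimize that convex quadratic over $\bm{A}$ and verify that the minimizer coincides with (\ref{eq.9}). First I would assemble the residuals into a matrix. Stacking the column vectors $\bm{r}_{k}$ defined in (\ref{eq.10x}) for $k=0,\ldots,N-1$ and using the definitions of $\bm{Y}$ and $\bm{Y}^{\prime}$ in (\ref{eq.5x}), the residual matrix is exactly $\bm{R}=\bm{Y}^{\prime}-\bm{A}\bm{Y}$, whose $(i,k)$ entry is the scalar residual $r_{k}^{[i]}$ of (\ref{eq.11x}). Consequently the objective in (\ref{eq.4x}) can be written compactly as $J(\bm{A})=\tfrac{1}{2}\norm{\bm{Y}^{\prime}-\bm{A}\bm{Y}}_{F}^{2}=\tfrac{1}{2}\,\mathrm{tr}\bigl((\bm{Y}^{\prime}-\bm{A}\bm{Y})(\bm{Y}^{\prime}-\bm{A}\bm{Y})\tran\bigr)$, since the squared Frobenius norm equals the double sum of squared entries.

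Next I would obtain the normal equations. The cleanest route is to differentiate $J$ directly with respect to $\bm{A}$: using $\partial\,\mathrm{tr}(\bm{M}\bm{M}\tran)/\partial\bm{M}=2\bm{M}$ with $\bm{M}=\bm{Y}^{\prime}-\bm{A}\bm{Y}$ together with the chain rule, one gets $\partial J/\partial\bm{A}=-(\bm{Y}^{\prime}-\bm{A}\bm{Y})\bm{Y}\tran$; setting this to zero yields the stationarity condition $\bm{A}\,\bm{Y}\bm{Y}\tran=\bm{Y}^{\prime}\bm{Y}\tran$. Equivalently, one may observe that $J$ decouples across the rows $\bm{\mathsf{a}}_{i}\tran$ of $\bm{A}$ into $m$ independent vector least-squares problems sharing the common design matrix $\bm{Y}\tran$; writing down each row's normal equation and stacking the rows produces the same matrix identity.

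Finally I would solve for $\bm{A}$ and confirm optimality. Since $\bm{Y}$ has linearly independent rows, the Gram matrix $\bm{Y}\bm{Y}\tran$ is invertible, as already noted after (\ref{eq.9}); right-multiplying the normal equation by $(\bm{Y}\bm{Y}\tran)^{-1}$ recovers $\bm{\widehat{A}}=\bm{Y}^{\prime}\bm{Y}\tran(\bm{Y}\bm{Y}\tran)^{-1}$, which is precisely (\ref{eq.9}). Because $J$ is a convex quadratic in the entries of $\bm{A}$—its Hessian is block diagonal with positive-definite blocks $\bm{Y}\bm{Y}\tran$—this unique stationary point is the global minimizer, establishing that $\bm{\widehat{A}}$ solves (\ref{eq.3x}) in the least-squares sense. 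There is no deep obstacle in this argument; the only point demanding care is the matrix bookkeeping—correctly identifying the stacked residuals with $\bm{Y}^{\prime}-\bm{A}\bm{Y}$ and handling the matrix derivative (or the row-wise decoupling) without index errors—together with the explicit invocation of the linear-independence hypothesis that guarantees invertibility of $\bm{Y}\bm{Y}\tran$.
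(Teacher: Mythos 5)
Your proof is correct and follows essentially the same route as the paper: both derive the normal equation $\bm{A}\bm{Y}\bm{Y}\tran=\bm{Y}^{\prime}\bm{Y}\tran$ by setting the gradient of the quadratic objective to zero and then invert $\bm{Y}\bm{Y}\tran$. The only differences are cosmetic and favorable to you — you differentiate the Frobenius-norm form directly in $\bm{A}$ rather than row by row via the score function as the paper does, and you add the convexity/Hessian check that the stationary point is the global minimizer, which the paper omits.
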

\begin{proof}
Define

\begin{align}
J_{\ell{s}}(\bm{\mathsf{a}}_{i}) :&= \sum_{k=0}^{N-1}\sum_{i=1}^{m} \frac{1}{2}\left(y_{k+1}^{[i]} - \bm{\mathsf{a}}_{i}\tran\bm{y}_{k}\right) \left(y_{k+1}^{[i]} - \bm{\mathsf{a}}_{i}\tran\bm{y}_{k}\right) \\
&= \sum_{k=0}^{N-1}\sum_{i=1}^{m}\frac{1}{2}\left(r_{k}^{[i]}\right)^{2} = \sum_{k=0}^{N-1}\sum_{i=1}^{m}\rho_{\ell{s}}\left(r_{k}^{[i]}\right) \nonumber.
\end{align}
To minimize $J_{\ell{s}}(\bm{\mathsf{a}}_{i})$, one takes its partial derivative with respect to $\bm{\mathsf{a}}_{i}$ and sets it equal to zero. Formally, we have

\begin{align}
\frac{\partial{J_{\ell{s}}}(\bm{\mathsf{a}}_{i})}{\partial\bm{\mathsf{a}}_{i}} &= \sum_{k=0}^{N-1}\sum_{i=1}^{m} \frac{\partial\rho_{\ell{s}}\left(r_{k}^{[i]}\right)}{\partial{r}_{k}^{[i]}} \cdot \frac{\partial{r}_{k}^{[i]}}{\partial\bm{\mathsf{a}}_{i}} \nonumber \\
&= -\sum_{k=0}^{N-1}\sum_{i=1}^{m} \psi_{\ell{s}}\left(r_{k}^{[i]}\right)\bm{y}_{k} = \bm{0}_{m} \label{eq.14x},
\end{align}

\noindent
where $\bm{0}_{m}$ denotes a column vector of dimension $m$, which has all elements equal to zero, and

\begin{equation} \label{eq.15x}
\psi_{\ell{s}}\left(r_{k}^{[i]}\right) = r_{k}^{[i]}
\end{equation}

\noindent
is the least-squares $\psi-$function, also known as the score function. From (\ref{eq.14x}), one has

\begin{align}
\sum_{k=0}^{N-1} \bm{r}_{k}\bm{y}_{k}\tran &= \sum_{k=0}^{N-1} \left(\bm{y}_{k+1} - \bm{A}\bm{y}_{k}\right)\bm{y}_{k}\tran
= \sum_{k=0}^{N-1} \bm{y}_{k+1}\bm{y}_{k}\tran - \bm{A}\bm{y}_{k}\bm{y}_{k}\tran \nonumber \\
&= \bm{Y}^{\prime}\bm{Y}\tran - \bm{A}\bm{Y}\bm{Y}\tran = \bm{0}_{mm} \label{eq.16x},
\end{align}

\noindent
where $\bm{0}_{mm}$ denotes a square matrix of dimension $m$, which has all elements equal to zero. Finally, from (\ref{eq.16x}), we have

\begin{equation}
\bm{A} = \bm{Y}^{\prime}\bm{Y}\tran\left(\bm{Y}\bm{Y}\tran\right)^{-1},
\end{equation}

\noindent
and the proof is complete.
\end{proof}

\begin{remark}
In statistics, the score function is the gradient of the log-likelihood function with respect to the parameter vector. Evaluated at a particular point of the parameter vector, the score indicates the steepness of the log-likelihood function and thereby the sensitivity to infinitesimal changes to the parameter values. If the log-likelihood function is continuous over the parameter space, then the score will vanish at a local maximum or minimum; this fact is used in the maximum-likelihood estimation to find the parameter values that maximize the likelihood function.
\end{remark}

\begin{remark}
A bounded influence function, which is proportional to the score function, is a necessary condition for a robust estimator. From (\ref{eq.15x}), it is clear that the least-squares score function is unbounded, and, therefore, the least-squares  estimator is not robust.
\end{remark}

Hence, it is expected that a least-squares estimator provides strongly biased results when the samples contained in the data matrices $\bm{Y}$ and $\bm{Y}^{\prime}$ are contaminated with outliers. As a result, in the presence of outliers, the exact DMD has an unbounded bias. Moreover, it can be shown that singular value decomposition is also based on a least-squares estimator; therefore, in the presence of outliers, the standard DMD method has also an unbounded bias.

\section{Robust Exact Dynamic Mode Decomposition}
It is necessary to modify the exact DMD method to construct a statistically robust DMD method. First, outliers must be detected and identified. To this end, we rely on projection statistics to derive weights over the interval $[0, 1]$ that are used to bound the influence of outliers. Specifically, the farther an outlier is from the center of the data cloud, the smaller its assigned weight. All remaining data points not identified as outliers receive a weight equal to $1$. These weights are incorporated into the Huber loss function to bound the influence of outliers in the estimation process. The details are presented next. 

\subsection{Multidimensional Outlier Detection}
The detection and identification of outliers are key steps in robust statistics. In statistical analysis, several methods have been proposed to detect an outlier based on its distance from the majority of the data point cloud, as explained next. 

Let a univariate data set, $\mathbb{P}\subseteq\mathbb{R}$, be $\{p_{1},...,p_{N}\}$. A measure of the distance between a data point, $p_{k}\in\mathbb{P}$, and the center of the data cloud is given by $\frac{p_{k}-\widehat{\ell}}{\widehat{s}}$, where $\widehat{\ell}$ denotes an estimator of location, and $\widehat{s}$ denotes an estimator of scale. A classic measure of distance in the univariate case is provided by 

\begin{equation} \label{eq.17x}
d(p_{k}) = \frac{p_{k}-\widehat{\mu}_{\mathbb{P}}}{\widehat{\sigma}_{\mathbb{P}}},
\end{equation}

\noindent
where the sample mean of the data points in ${\mathbb{P}}$, $\widehat{\mu}_{\mathbb{P}}$, is used as an estimator of location and the sample standard deviation of the data points in ${\mathbb{P}}$, $\widehat{\sigma}_{\mathbb{P}}$, is used as an estimator of scale. Note that $d(p_{k})$ is often referred to as the z-score of the data point $p_{k}$. The classic measure of distance given by (\ref{eq.17x}) is generalized to the multivariate case by the Mahalanobis distance.

\begin{definition}[Mahalanobis distance]
Let a multivariate data set, $\mathbb{P}\subseteq\mathbb{R}^{m}$, be $\{\bm{p}_{1},...,\bm{p}_{N}\}$. The Mahalanobis distance between a data point, $\bm{p}_{k}$, and the data cloud comprising all data points in $\mathbb{P}$ is defined as
\begin{equation}
d_{M}(\bm{p}_{k}) := \left[\left(\bm{p}_{k}-\widehat{\bm{\mu}}\right)\tran \widehat{\bm{S}}^{-1} \left(\bm{p}_{k}-\widehat{\bm{\mu}}\right)\right]^{1/2} \label{eq.18x},
\end{equation}

\noindent
where $\widehat{\bm{\mu}}=\left(\widehat{\mu}_{1},...,\widehat{\mu}_{N}\right)\tran$ and $\widehat{\bm{S}}$ are, respectively, the sample mean and the sample covariance matrix of the data points in $\mathbb{P}$.
\end{definition}
By comparing (\ref{eq.17x}) and (\ref{eq.18x}), note that the sample standard deviation used in the univariate case is replaced by the sample covariance matrix in the multivariate case.

It can be shown (see \cite{Rousseeuw2005}) that for a scalar $b\in\mathbb{R}$, the set of data points for which $d_{M}^{2}<b$ lies inside an ellipsoid with the center at $\widehat{\bm{\mu}}$. Moreover, if the data points in $\mathbb{P}\subseteq\mathbb{R}^{m}$ follow a multivariate normal distribution, then the values of $d_{M}^{2}$ follow a chi-square distribution with $m$ degrees of freedom, $\chi_{m}^{2}$; hence, there is a probability of $1-\alpha$ that a data point $\bm{p}_{k}$ such that $d_{M}^{2}\le\chi_{m,1-\alpha}^{2}$ is located within an ellipsoid given by $d_{M}^{2}=\chi_{m,1-\alpha}^{2}$ that is centered at $\widehat{\bm{\mu}}$. This provides the rationale used to tag outliers---that is, an outlier is any data point for which the Mahalanobis distance is larger than a threshold, e.g., $\left(\chi_{m,0.975}^{2}\right)^{1/2}$. But because $d_{M}$ is calculated via non-robust estimators of location and scale, it is vulnerable to the masking effect of multiple outliers, especially when the latter appear in clusters \cite{Rousseeuw1990}. In other words, the corresponding ellipsoid is inflated to the point that it encompasses outliers, which can no longer be identified. To gain robustness, one can replace the sample mean and the sample covariance matrix in (\ref{eq.18x}) by robust estimators of location and scale, respectively. This is discussed next.

\begin{definition}[Median absolute deviation from median in the case of a univariate data set]
Let a univariate data set, $\mathbb{P}\subseteq\mathbb{R}$, be $\{p_{1},...,p_{N}\}$. A very robust estimator of scale \cite{Rousseeuw1993} is the median absolute deviation from the median, which is defined as
\begin{equation}
\textnormal{mad}_{\mathbb{P}} := 1.4826 \cdot \textnormal{median} \left|p_{k} - \textnormal{median} \left( \bm{p}\tran \right) \right|
\end{equation}

\noindent
for $k=\{1,...,N\}$, where $\bm{p}\tran=[p_{1}\;p_{2}\;...\;p_{N}]$, and the constant 1.4826 makes the estimator consistent at normal distributions.
\end{definition}

In the univariate case, a robust distance between $p_{k}$ and the center of the data cloud is given by:

\begin{equation}
d_{r}(p_{k}) = \frac{\left|p_{k}-\text{median}\left(\bm{p}\tran\right)\right|}{\text{mad}_{\mathbb{P}}}.
\end{equation}

\begin{definition}[Median absolute deviation from the median in the case of a multivariate data set]
Let a multivariate data set, $\mathbb{P}\subseteq\mathbb{R}^{m}$, be $\{\bm{p}_{1},...,\bm{p}_{N}\}$. The median absolute deviation from the median is defined as
\begin{equation}
\widehat{s}_{1} := 1.4826 \cdot \textnormal{median}_{k} \big(\left| \bm{p}_{k}\tran\bm{v} - \textnormal{median}_{j} \left( \bm{p}_{j}\tran\bm{v} \right) \right|\big), \label{eq.14}
\end{equation}

\noindent
for $k,\,j=\{1,2,...,N\}$, where $\bm{v}$ is the direction to which the data points are projected. 
\end{definition}

In the multivariate case, however, it is challenging to align all multivariate data points such that a meaningful measure of distance can be obtained. A solution to this challenge stems from the fact that the Mahalanobis distance can be written as follows \cite{Donoho1982}:

\begin{align}
d_{M}(\bm{p}_{k}) &= \left[\left(\bm{p}_{k}-\widehat{\bm{\mu}}\right)\tran \widehat{\bm{S}}^{-1} \left(\bm{p}_{k}-\widehat{\bm{\mu}}\right)\right]^{1/2} \nonumber \\
&\equiv \max_{\left|\bm{v}\right|=1} \left(\frac{\left| \bm{p}_{k}\tran\bm{v} - \widehat{\ell} \left( \bm{p}_{1}\tran\bm{v}, ..., \bm{p}_{N}\tran\bm{v} \right) \right|}{\widehat{s}\left( \bm{p}_{1}\tran\bm{v}, ..., \bm{p}_{N}\tran\bm{v} \right)}\right), \label{eq.22x}
\end{align}

\noindent
where $\widehat{\ell}$ and $\widehat{s}$ denote, respectively, an estimator of location and scale. Note that the maximization should be considered on all possible directions $\bm{v}$. To robustify (\ref{eq.22x}), Donoho and Gasko \cite{Donoho1992} suggest using the sample median as the estimator of location and the median absolute deviation from the median given by (24), $\widehat{s}_{1}$, as the estimator of scale. This distance is referred to as the \emph{projection statistic}, and it is defined as

\begin{equation}
d_{\text{ps}}(\bm{p}_{k}) = d_{ps,\,k} := \max_{\left|\bm{v}\right|=1} \left(\frac{\left| \bm{p}_{k}\tran\bm{v} - \text{median}_{j} \left( \bm{p}_{j}\tran\bm{v} \right) \right|}{\widehat{s}_{1}}\right), \label{eq.17}
\end{equation}

\noindent
where the sample median and $\widehat{s}_{1}$ are calculated on the direction of all feasible unit vectors $\bm{v}$. Unfortunately, $\widehat{s}_{1}$ loses statistical efficiency for asymmetric distributions. To address this issue, Croux and Rousseeuw \cite{Croux1992, Rousseeuw1993} propose another robust estimator of scale, which is statistically efficient for asymmetric distributions, and it is defined as

\begin{align} \label{eq.24x}
\widehat{s}_{2} := 1.1926 \cdot \text{lomed}_{k} \left(\right. \text{lomed}_{j\ne k}\left|\right. &\bm{p}_{k}\tran\bm{v} - \bm{p}_{j}\tran\bm{v} \left.\right| \left.\right),
\end{align}

\noindent
for $k,j=\{1,...,N\}$, where lomed denotes a low median---that is, the $\big((N+1)/2\big)$-th order statistic of $N$ data points. Eq. (\ref{eq.24x}) reads as follows: for each $k$, we compute the low median of $|\bm{p}_{k}-\bm{p}_{j}|$ for $j=\{1,...,N\}$. This yields $N$ data points, the low median of which gives the final estimate, $\widehat{s}_{2}$. The factor $1.1926$ is for consistency at the normal distribution.

Yet, another challenge encountered in calculating the projection statistics is that considering all possible directions of $\bm{v}$ cannot be realized. To address this, Gasko and Donoho \cite{Gasko1982} suggest considering only the directions that originate from the coordinate-wise median vector, $\bm{v}_{\text{med}}$, and that pass through each data point $\bm{p}_{k}$, yielding

\begin{equation}
\bm{v}_{k} = \bm{p}_{k} - \bm{v}_{\text{med}},
\end{equation}

\noindent
where

\begin{equation} \label{eq.19}
\bm{v}_{\text{med}} = \left[ \text{median}_{j}(\bm{x}_{j1})\;\; \text{median}_{j}(\bm{x}_{j2})\;\; ...\;\; \text{median}_{j}\bm{x}_{jm} \right]\tran.
\end{equation}

\noindent
Thus, it is enough to investigate $m$ directions. Moreover, it is not necessary to consider $\bm{v}$ to be of unit length because $\widehat{s}_{1}$ and $\widehat{s}_{2}$ are affine equivariant scale estimators.

As is the case for the Mahalanobis distance, the projected statistics approximately follow a chi-square distribution with $m$ degrees of freedom when the data points follow a multivariate normal distribution \cite{Rousseeuw2005}; hence, the projection statistic of each data point, $\bm{p}_{k}$, is calculated, and if they exceed a threshold, e.g., $d_{\text{ps,\,k}}^{2}> \chi_{2,0.975}^{2}$, then the associated data point is tagged as an outlier. Note that in this paper, the data points, $\bm{p}_{k}$, are the columns of the data matrices given by (\ref{eq.5x}).

Next, we develop a mechanism to suppress the adverse effect of the outliers on the estimation process. This is achieved by defining weights, which are calculated via projection statistics and incorporated into the Huber loss function.

\subsection{Generalized Maximum-Likelihood Robust Estimation}

\begin{definition}[Huber loss function]
The Huber loss function is defined as:

\begin{equation}
\rho_{H}\left(r_{k}^{[i]}\right) := 
\begin{cases}
\begin{array}{ll}
\frac{1}{2}\left(r_{k}^{[i]}\right)^{2} & \text{for}\,|r_{k}^{[i]}|\le\delta, \\
\delta|r_{k}^{[i]}| - \frac{1}{2}\delta^{2} & \text{otherwise}.
\end{array}
\end{cases} \label{eq.23}
\end{equation}
\end{definition}
Note that $\rho_{H}(\cdot)$ is quadratic for $|r_{k}^{[i]}|\le\delta$, and linear otherwise. The quadratic and linear sections connect at the point where $|r_{k}^{[i]}|$ is equal to the scalar-valued parameter, $\delta$, which dictates the slope of the function. The parameter $\delta$ is usually adjusted to have a numerical value between $1$ and $3$ to have high statistical efficiency at the normal distribution \cite{Rousseeuw2005}. In this work, we set $\delta=1.5$.

Now, let the {\color{black}Schweppe-type Huber generalized maximum-likelihood} estimator be defined such that it minimizes a convex objective function given by

\begin{equation}
J_{H}\left(\bm{\mathsf{a}}_{i}\right) = \sum_{k=0}^{N-1}\sum_{i=1}^{m} w_{k}^{2}\cdot\rho_{H}\left(r_{ks}^{[i]}\right), \label{eq.21}
\end{equation}

\noindent
where 

\begin{align}
r_{ks}^{[i]} &= \frac{r_{k}^{[i]}}{s\cdot w_{k}} = \frac{1}{s\cdot w_{k}} \left( y_{k+1}^{[i]} - \bm{\mathsf{a}}_{i}\tran\bm{y}_{k} \right), \\
s &= 1.4826 \cdot b_{m} \cdot \text{median}\left|\bm{r}_{k}\tran\right|, \\
w_{k} &= \min\left( 1,\,\frac{b}{d_{ps,\,k}^{2}} \right) \label{eq.31x}.
\end{align}

\noindent
Here, $s$ is a robust estimator of scale, $b_{m}$ is a correction factor, $b$ is set equal to 1.5 for best statistical efficiency and to avoid large biases \cite{Rousseeuw2005}, and $w_{i}$ are weights determined via projection statistics. The calculated weights play an essential role in the development of the RDMD method. {\color{black}Note that $\bm{\mathsf{a}}_{i}$ is the vector-valued variable being optimized, and the cost function is convex with respect to the residuals $r_{k}^{[i]}$.}

{\color{black}We stress that (\ref{eq.21}) is a convex objective function. This fact offers a significant computational advantage over non-convex approaches. Further, note that the Huber loss function (\ref{eq.23}) is quadratic for small values of $r_{k}$, and linear for large values. This characteristic yields a bounded score function, despite the loss function being convex. The latter is an advantage over, e.g., the least-absolute value estimator, which is robust but has a non-convex loss function, or the least-squares estimator, which has a convex loss function but is not robust.}

Having introduced a robust outlier detection algorithm and a generalized maximum-likelihood estimator that can suppress the bias introduced by the outliers, we now present the main result of this paper.

\subsection{Robust Dynamic Mode Decomposition}\label{sec.RDMD}

\noindent
Let $\bm{y}^{\prime}=\text{col}(\bm{Y}^{\prime})=[\bm{y}_{1}\tran\; ...\; \bm{y}_{N}\tran]\tran$, $\bm{a}=\text{row}(\bm{A})=[\bm{\mathsf{a}}_{1}\tran\; ...\; \bm{\mathsf{a}}_{m}\tran]\tran$, and $\bm{B}=\bm{I}_{m}\otimes\bm{Y}\tran$, where $\bm{I}_{m}$ denotes an identity matrix of dimension $m$, and $\otimes$ denotes the Kronecker product. The linear regression (\ref{eq.7}) can be rewritten as:

\begin{align}
\bm{y}^{\prime} = \bm{B}\bm{a} &= \left(\bm{I}_{m}\otimes\bm{Y}\tran\right)\bm{a} \nonumber\\
&=
\left[
\begin{array}{c c c c}
\bm{Y}\tran & & & \\
& \bm{Y}\tran & & \\
& & \ddots & \\
& & & \bm{Y}\tran
\end{array} 
\right]
\left[
\begin{array}{c}
\bm{\mathsf{a}}_{1} \\
\bm{\mathsf{a}}_{2} \\
\vdots \\
\bm{\mathsf{a}}_{m}
\end{array}
\right]. \label{eq.33}
\end{align}

A solution to (\ref{eq.33}) can be found by solving $m$ subproblems of the form given by

\begin{equation}
\bm{y}^{\prime} = \bm{Y}\tran\bm{\mathsf{a}}_{i}, \quad i=\{1,2,...,m\}.
\end{equation}

Hence, for each subproblem $i$, we seek a robust estimate, $\bm{\widehat{\mathsf{a}}}_{i}$, that is the solution to $\bm{y}^{\prime} = \bm{Y}\tran\bm{\mathsf{a}}_{i}$. Note that the residues, $r_{k}^{[i]}$, given by (\ref{eq.11x}) naturally apply to each subproblem $i$. Following (\ref{eq.21}), for each subproblem $i$, we minimize the cost function defined as

\begin{equation}
J_{H}\left(\bm{\mathsf{a}}_{i}\right) = \sum_{k=0}^{N-1} w_{k}^{2}\cdot\rho_{H}\left(r_{ks}^{[i]}\right), \label{eq.35x}
\end{equation}

\noindent
and the optimal solution to (\ref{eq.35x}) satisfies

\begin{equation}
\frac{\partial J_{H}\left(\bm{\mathsf{a}}_{i}\right)}{\bm{\mathsf{a}}_{i}} = \sum_{k=0}^{N-1} -\frac{w_{k}\bm{y}_{k}}{s}\cdot\psi_{H}\left(r_{ks}^{[i]}\right) = \bm{0}_{m}, \label{eq.36x}
\end{equation}

\noindent
where $\psi_{H}\left(r_{ks}^{[i]}\right) = \partial \rho\left(r_{ks}^{[i]}\right)/r_{ks}^{[i]}$, and

\begin{equation}
\psi_{H}\left(r_{ks}^{[i]}\right) = 
\begin{cases}
\begin{array}{ll}
r_{ks}^{[i]} & \text{for}\,\left|r_{k}^{[i]}\right|\le\delta, \\
\delta\cdot\text{sign}\left(r_{ks}^{[i]}\right) & \text{otherwise},
\end{array}
\end{cases}
\end{equation}

\noindent
is the Huber score function. An illustration of the least-squares and the Huber loss and score functions is shown in Fig. \ref{fig.1x}. We stress that having a bounded score function is a necessary condition for an estimator to be robust. Finally, a robust solution to (\ref{eq.33}) is given by solving (\ref{eq.36x}) for $i=\{1,2,...,m\}$, as follows. By multiplying and dividing the Huber score function in (\ref{eq.36x}) by $r_{ks}^{[i]}$, and by defining the scalar weight function as $q\left(r_{ks}^{[i]}\right):=\psi_{H}\left(r_{ks}^{[i]}\right)/r_{ks}^{[i]}$, (\ref{eq.36x}) can be expressed in matrix form, as follows:

\begin{figure}[!t]
\begin{center}
\subfloat[Least-squares loss/score function]{\includegraphics[width=0.24\textwidth]{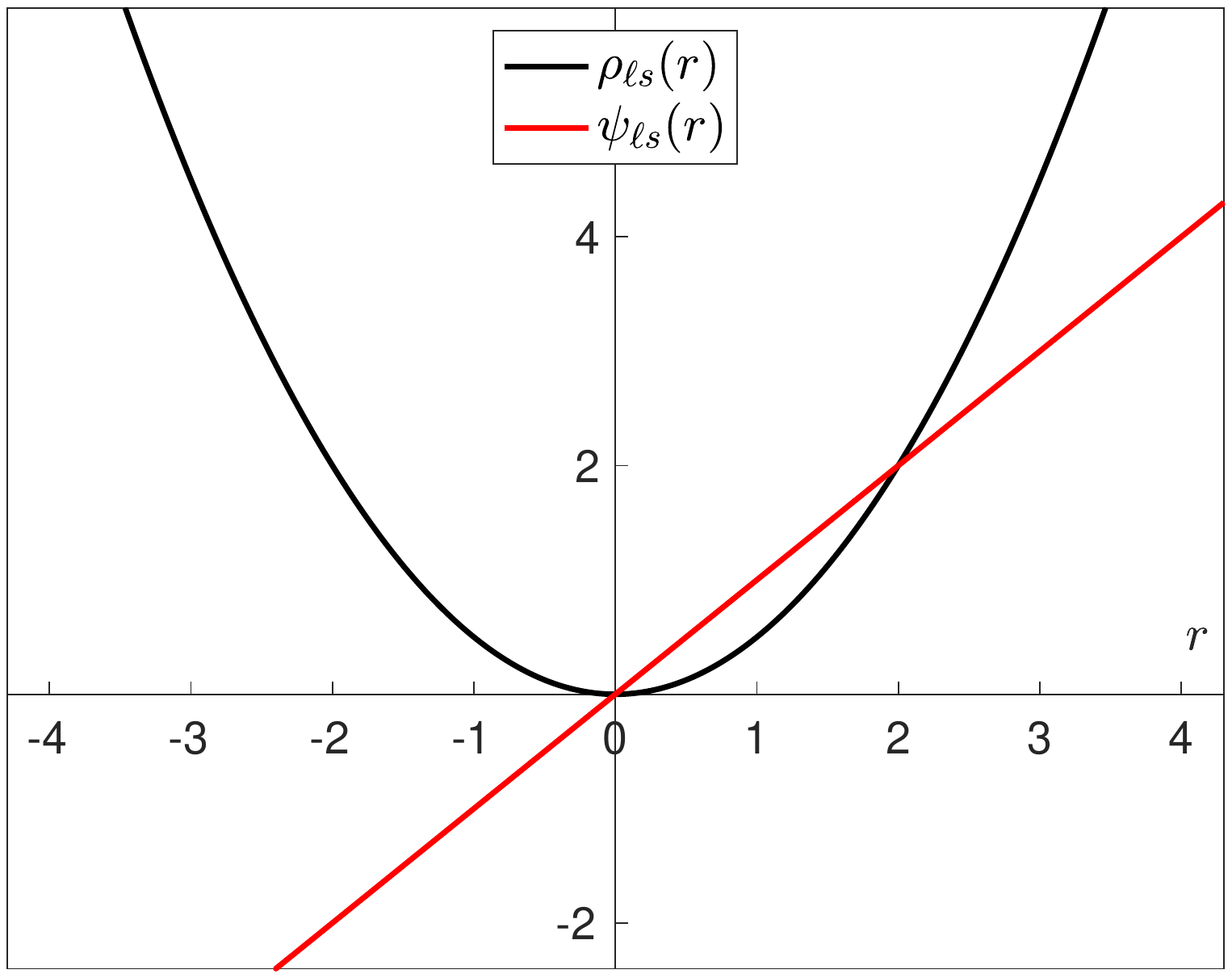}}
\subfloat[Huber loss/score function]{\includegraphics[width=0.24\textwidth]{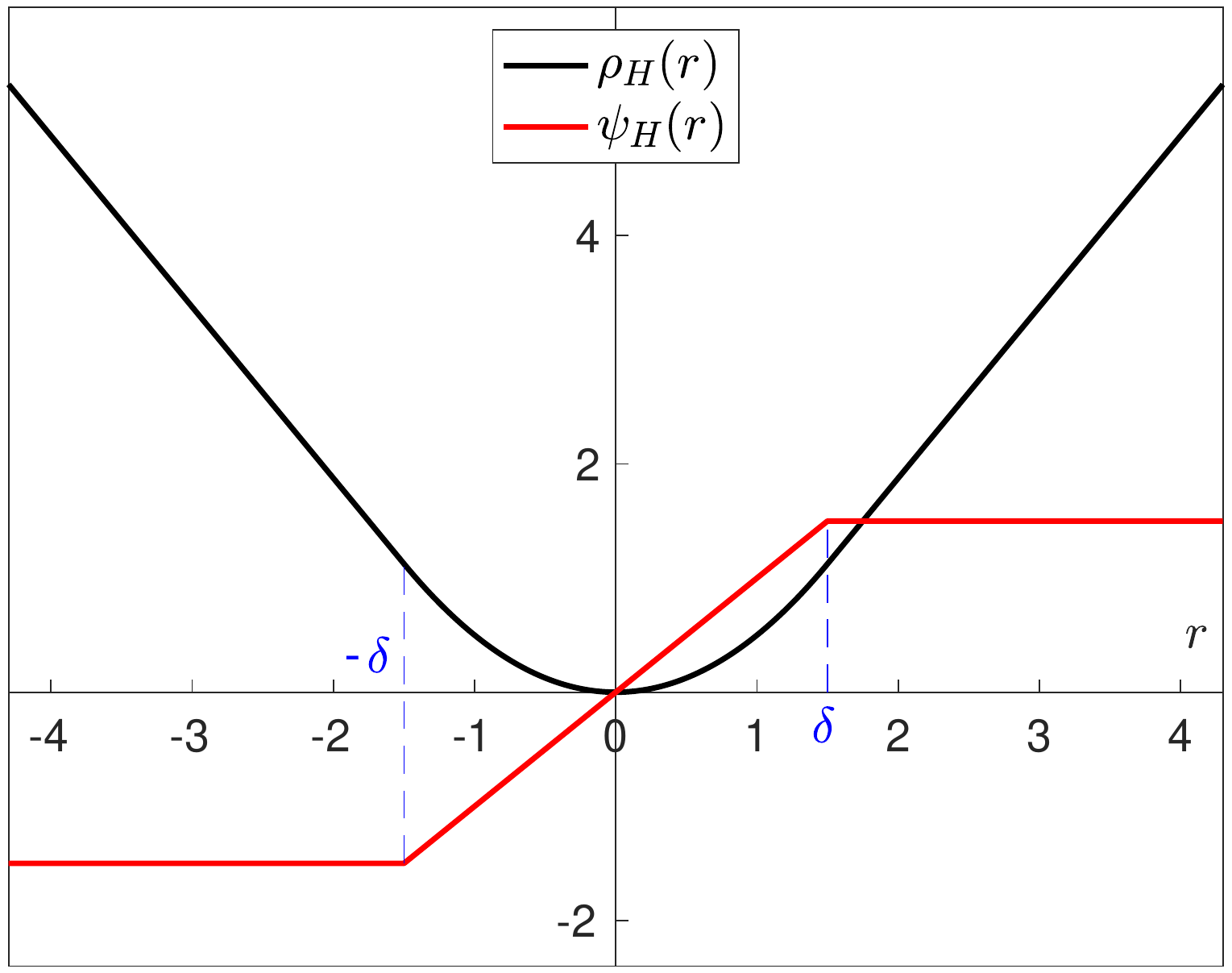}}
\caption{Comparison between (a) least-squares estimator and (b) generalized maximum-likelihood robust estimator based on the Huber loss function.}
\label{fig.1x}
\end{center}
\end{figure}

\begin{equation}
\bm{Y}\bm{Q}\left(\bm{y}^{\prime}-\bm{Y}\tran\bm{\widehat{a}}_{i}\right) = \bm{0},
\end{equation}

\noindent
where $\bm{Q}=\text{diag}\left(q\left(r_{ks}^{[i]}\right)\right)$. Solving for the estimate $\bm{\widehat{a}}_{i}$ using the iteratively reweighted least squares (IRLS) algorithm yields

\begin{equation}
\bm{\widehat{a}}_{i}^{\left(\upsilon+1\right)} = \left(\bm{Y}\bm{Q}^{\left(\upsilon\right)}\bm{Y}\tran\right)^{-1}\bm{Y}\bm{Q}^{\left(\upsilon\right)}\bm{y}^{\prime}, \label{eq.39x}
\end{equation}

\noindent
where the superscript $(\upsilon)$ indicates the $\upsilon$-th iteration. The condition for convergence of the IRLS algorithm is adjusted to meet $||\bm{\widehat{a}}_{i}^{\left(\upsilon+1\right)}-\bm{\widehat{a}}_{i}^{\left(\upsilon\right)}||\le 0.01$. In what follows, the numerical method in (\ref{eq.33})--(\ref{eq.39x}) is referred to as K-RDMD, an allusion to the Kronecker product in (\ref{eq.33}). The computation time of K-RDMD is expected to increase substantially as the number of time instances, $N$, increases. This is addressed next.

Let us redefine the objective function as follows:

\begin{equation}\label{eq.40x}
J_{H}(\bm{A})=\sum_{k=1}^{N-1}w_{k}^{2}\cdot\rho_{H}\left(\bm{r}_{ks}\right), 
\end{equation}

\noindent
where $\rho_{H}\left(\bm{r}_{ks}\right)$ is a modified Huber loss function, as follows:

\begin{equation}
\rho_{H}\left(\bm{r}_{ks}\right) =
\begin{cases}
\begin{array}{ll}
\frac{1}{2}||\bm{r}_{ks}||^{2} & \text{for}\,||\bm{r}_{ks}||\le\delta, \\
\delta\cdot||\bm{r}_{ks}||-\frac{1}{2}\delta^{2} & \text{otherwise},
\end{array}
\end{cases}
\end{equation}

\noindent
where $\bm{r}_{ks}=\bm{r}_{k}/\left(s\cdot{w}_{k}\right)$; $\bm{r}_{k}$ and $\delta$ are as defined before. The optimal solution to (\ref{eq.40x}) is given by

\begin{equation}
\frac{\partial J_{H}(\bm{A})}{\partial\bm{A}} =
\sum_{k=0}^{N-1}
w_{k}^{2}\cdot
\frac{\partial{\rho}_{H}(\bm{r}_{ks})}{\partial||\bm{r}_{ks}||^2}\cdot
\frac{\partial||\bm{r}_{ks}||^2}{\partial||\bm{r}_{k}||^2}\cdot
\frac{\partial||\bm{r}_{k}||^2}{\partial\bm{A}},
\end{equation}

\noindent
where

\begin{align}
\frac{\partial\rho_{H}(||\bm{r}_{ks}||)}{\partial||\bm{r}_{ks}||^2} &= \frac{1}{2||\bm{r}_{ks}||}\frac{\partial{\rho}_{H}(||\bm{r}_{ks}||)}{\partial||\bm{r}_{ks}||} = \frac{\psi_{H}(||\bm{r}_{ks}||)}{2||\bm{r}_{ks}||}, \\
\frac{\partial||\bm{r}_{ks}||^2}{\partial||\bm{r}_{k}||^2} &= \frac{1}{s^{2}w_{k}^{2}}, \\
\frac{\partial||\bm{r}_{k}||^2}{\partial\bm{A}} &= 2\bm{r}_{k}\bm{y}_{k}\tran.
\end{align}

\noindent
Thus, we have

\begin{equation}
\frac{\partial J_{H}(\bm{A})}{\partial\bm{A}} = 
\sum_{k=0}^{N-1} 
w_{k}^{2}
\frac{\psi_{H}(||\bm{r}_{ks}||)}{||\bm{r}_{ks}||}
\frac{\left(\bm{y}_{k+1}-\bm{A}\bm{y}_{k}\right)\bm{y}_{k}\tran}{s^{2}w_{k}^{2}} = \bm{0}. \label{eq.45x}
\end{equation}
By putting (\ref{eq.45x}) in matrix form, we get

\begin{equation}
\bm{Y}\bm{Q}\left(\bm{Y}^{\prime}-\bm{A}\bm{Y}\right)\tran = \bm{0},
\end{equation}

\noindent
where $\bm{Q}=\text{diag}\left({\psi_{H}(||\bm{r}_{ks}||)}/{||\bm{r}_{ks}||}\right)$. 
Solving for the estimate $\bm{\widehat{A}}$ using the IRLS algorithm yields

\begin{equation}\label{eq.48x}
\bm{\widehat{A}}^{(\upsilon+1)} = \bm{Y}^{\prime}\bm{Q}^{(\upsilon)}\bm{Y}\tran\left(\bm{Y}\bm{Q}^{(\upsilon)}\bm{Y}\tran\right)^{-1}.
\end{equation}
The condition for convergence of the IRLS algorithm is set to meet $||\bm{\widehat{A}}^{(\upsilon+1)}-\bm{\widehat{A}}^{(\upsilon)}||_{F}\le 0.01$. In what follows, the numerical method in (\ref{eq.40x})--(\ref{eq.48x}) is referred to as N-RDMD, an allusion to the use of the norm of a residual vector. The performance of the proposed RDMD methods---K-RDMD and N-RDMD---is assessed in Section \ref{sec.Results}.

\subsection{Robust Standard Dynamic Mode Decomposition}
The first steps of the standard DMD method are outlined in Section \ref{sec.DMD}, where $c$ denotes the rank of $\bm{Y}$. Note that it is common to assume $c\le m<N$ for large data sets. Moreover, particularly for model order reduction, one is interested in a projection $\bm{\widetilde{A}}\in\mathbb{R}^{c^{\prime}\times{c}^{\prime}}$, where $c^{\prime}<c$. In this case, one disregards $(c-c^{\prime})$ nonzero elements of $\bm{\Sigma}$ and $(c-c^{\prime})$ columns of $\bm{U}$ and $\bm{V}$. This changes the unitary property of $\bm{U}$ and $\bm{V}$ as follows:

\begin{align}
\bm{U}^{*}\bm{U} &= \bm{I}_{c^{\prime}}, \quad \bm{U}\bm{U}^{*}\ne\bm{I}_{N}, \\
\bm{V}^{*}\bm{V} &= \bm{I}_{c^{\prime}}, \quad \bm{V}\bm{V}^{*}\ne\bm{I}_{m}.
\end{align}

In this section, we discuss how to robustify the standard DMD method, even for the most challenging case described herein. Let $\bm{A}=\bm{T}\bm{\widetilde{A}}\bm{T}^{\dagger}$. Further, let $\bm{T}\in\mathbb{C}^{N\times c^{\prime}}$, such that $\bm{T}^{\dagger}\bm{T}=\bm{I}_c^{\prime}$. Then, the eigenvalues of $\bm{\widetilde{A}}$ are a subset of the eigenvalues of $\bm{A}$; hence, for a known transformation $\bm{T}$, the residues in (\ref{eq.10x}) can be rewritten as

\begin{equation}
\bm{r}_{k} = \bm{y}_{k+1} - \bm{T}\bm{\widetilde{A}}\bm{T}^{\dagger}\bm{y}_{k}.
\end{equation}
Thus, we have:

\begin{equation}
\bm{T}^{\dagger}\bm{Y}\bm{Q}\left(\bm{T}^{\dagger}\bm{Y}^{\prime}-\bm{\widetilde{A}}\bm{T}^{\dagger}\bm{Y}\right)\tran = \bm{0},
\end{equation}

\noindent
yielding to the iterative formula given by

\begin{equation} \label{eq.53x}
\bm{\widetilde{A}}^{(\upsilon+1)} = \bm{T}^{\dagger}\bm{Y}^{\prime}\bm{Q}^{(\upsilon)}\bm{Y}\tran\bm{T}\left(\bm{T}^{\dagger}\bm{Y}\bm{Q}^{(\upsilon)}\bm{Y}\tran\bm{T}\right)^{-1}.
\end{equation}

Note that in (\ref{eq.53x}), the weights calculated via the projection statistics act on the matrix $\bm{Q}$. Also note that if $\bm{T}\bm{T}^{\dagger}=\bm{I}_{N}$, then $\bm{Q}$ is canceled out from (\ref{eq.53x}), leading to an ordinary least-squares solution, $\bm{\Tilde{A}}=\bm{Y}^{\prime}\bm{Y}^{\dagger}$, and we loose robustness; therefore, to guarantee robustness, we should have $\bm{T}\bm{T}^{\dagger}\ne\bm{I}_{N}$. 
To relax such a constraint on (\ref{eq.53x}), we resort to the following iterative procedure:
\begin{equation}\label{eq.54x}
\bm{\widetilde{A}}^{(\upsilon+1)} = \bm{T}^{\dagger}\bm{Y}^{\prime}\bm{Q}^{(\upsilon)}\bm{Y}\tran\bm{T}\left(\bm{T}^{\dagger}\bm{Y}\bm{Q}^{(\upsilon)}\bm{Y}\tran\bm{T} + \gamma^{2}\bm{I}\right)^{-1},
\end{equation} 
This is known as the Tikhonov regularization, which is usually used to solve ill-posed regression problems \cite{Tikhonov1979}. In (\ref{eq.54x}), $\gamma$ is a small positive constant. It also can avoid possible numerical problems in calculating the inverse in (\ref{eq.53x}). 

Although the estimation of $\bm{\widetilde{A}}$ from (\ref{eq.53x}) is robust to outliers, it is not guaranteed that it includes all the dominant modes in the data. In other words, although each eigenvalue of $\bm{\widetilde{A}}$ is also an eigenvalue of the original $\bm{A}$, there is the possibility that the dominant eigenvalues are not included. This is mainly determined by how one selects the reduction matrix, $\bm{T}$. An option is to choose $\bm{T}=\bm{U}$, as is done for the standard DMD method. Note that as previously explained, $\bm{U}$ is not a unitary matrix; therefore, robustness is ensured. On the other hand, because there are outliers in the data matrix, $\bm{Y}$, such a reduction might not be able to cover all dominant modes. It is observed that for a low percentage of outliers among the data points, this selection captures the dominant eigenvalues. Another option is to preprocess the data matrices before performing the singular value decomposition. A comprehensive investigation on the choice of the similarity transformation in RDMD will be addressed in future research.

\section{Numerical Results}\label{sec.Results}
In what follows, the performance of the proposed RDMD method is assessed by using a variety of canonical dynamical systems. Furthermore, the performance of the proposed RDMD method is compared to the performance of

\begin{itemize}
\item the exact DMD method proposed in \cite{Tu2014};
\item the total dynamic mode decomposition (TDMD) method proposed in \cite{Hemati2017};
\item the optimization-based dynamic mode decomposition (ODMD) method proposed in \cite{Sinha2020};
\item the robust trimmed DMD proposed in \cite{Askham2017}.
\end{itemize}.
We start by comparing the performance of the two variations of the RDMD method {\color{black}\footnote{\color{black}The MATLAB code is available at \url{https://github.com/amasoumi60/Robust-Dynamic-Mode-Decomposition}.}} developed in Section \ref{sec.RDMD}.

\subsection{Comparison of K-RDMD and N-RDMD}
Consider a network of $s$ oscillators connected in a ring topology. The differential equation describing the angular displacement of the $k$-th oscillator is given by

\begin{equation}
\ddot{\theta}_{k} + \bm{\ell}_{k}\tran\bm{\theta} +d_{k}\theta_{k} = 0,
\end{equation} 

\noindent
where $\theta_{k}$ and $d_{k}$ are, respectively, the angle and the damping coefficient of the $k$-th oscillator; $\bm{\ell}_{k}\tran$ denotes the $k$-th row of the Laplacian matrix $\bm{\mathcal{L}}$; and $\bm{\theta}=\left[\theta_{1}\;\theta_{2}\;...\;\theta_{s}\right]\tran$. Note that the number of states is $m=2\cdot{s}$, and a ring topology with $s=15$ oscillators is considered. The state-space equations are written as

\begin{align}
\bm{\dot{\theta}} &= \bm{\omega}, \nonumber \\ 
\bm{\dot{\omega}} &= -\bm{\mathcal{L}}\bm{\theta} -\bm{D}\bm{\theta}, \label{eq.ring}
\end{align}

\noindent
$\bm{\omega}=\left[\omega_{1}\;...\;\omega_{s}\right]\tran$, and $\bm{D}=\text{diag}(d_{1},...,d_{s})$. The data are collected with a sample time of $0.01$ second. The damping, $d_{k}$, is set to $0.05$ for all $k$.

Fig. \ref{fig.2x} shows the performance of N-RDMD and K-RDMD for the network of the coupled oscillators with 15 oscillators and 500 snapshots. We observe that both methods are robust to outliers. The reconstruction cumulative error that is defined as $\sum_{i=0}^{k}||{\bm{x}_{i,\text{reconstructed}}-\bm{x}_{i,\text{true}}}||$ is also depicted in Fig. \ref{fig.2x}. A difference in performance is observed in the reconstruction of a state variable, i.e., N-RDMD performs better than K-RDMD in reconstructing the state value. This difference is further investigated under three different scenarios, which are summarized in Table \ref{tab.1}. One can expect that despite the high computational effort by K-RDMD, it performs better than N-RDMD; however, note that K-RDMD solves $m$ separate robust estimation problems, for each row of the matrix $\bm{A}$, using the IRLS algorithm. Note that each IRLS loop has an error threshold that overall leads to a larger error in estimating the matrix $\bm{A}$, whereas N-RDMD uses a single IRLS loop using the Frobenius norm of the error matrix. 

In what follows, the performance of the N-RDMD is compared to other DMD methods available in the literature; in this paper, it is referred to as RDMD. We choose N-RDMD because it has a higher computational efficiency than the K-RDMD. We start by presenting results on a simple linear dynamical system and gradually move toward more complex nonlinear dynamical systems.

\begin{table}[!t]
\centering \scriptsize
\setlength{\tabcolsep}{1.0em}
\caption{Computation time (s) and cumulative errors of K-RDMD and N-RDMD} \vspace{-0.2cm}
\begin{tabular}{l l l l l} \hline
Problem dimension & & $s=15,$ & $s=150,$ & $s=150,$ \\
& & $N=500$ & $N=50$ & $N=500$ \\ \hline
Cumulative error & N-RDMD & $0.1178$ & $0.0045$ & $0.1059$ \\
& K-RDMD & $1.1083$ & $10.979$ & $16.447$ \\ \hline
Computation time & N-RDMD & $1.6747$ & $0.1180$ & $6.4297$ \\
& K-RDMD & $1.7398$ & $1.3491$ & $10.011$ \\ \hline
\end{tabular}
\label{tab.1}
\end{table}

\begin{figure}[!t]
\centering
\subfloat[]{\includegraphics[width=0.25\textwidth]{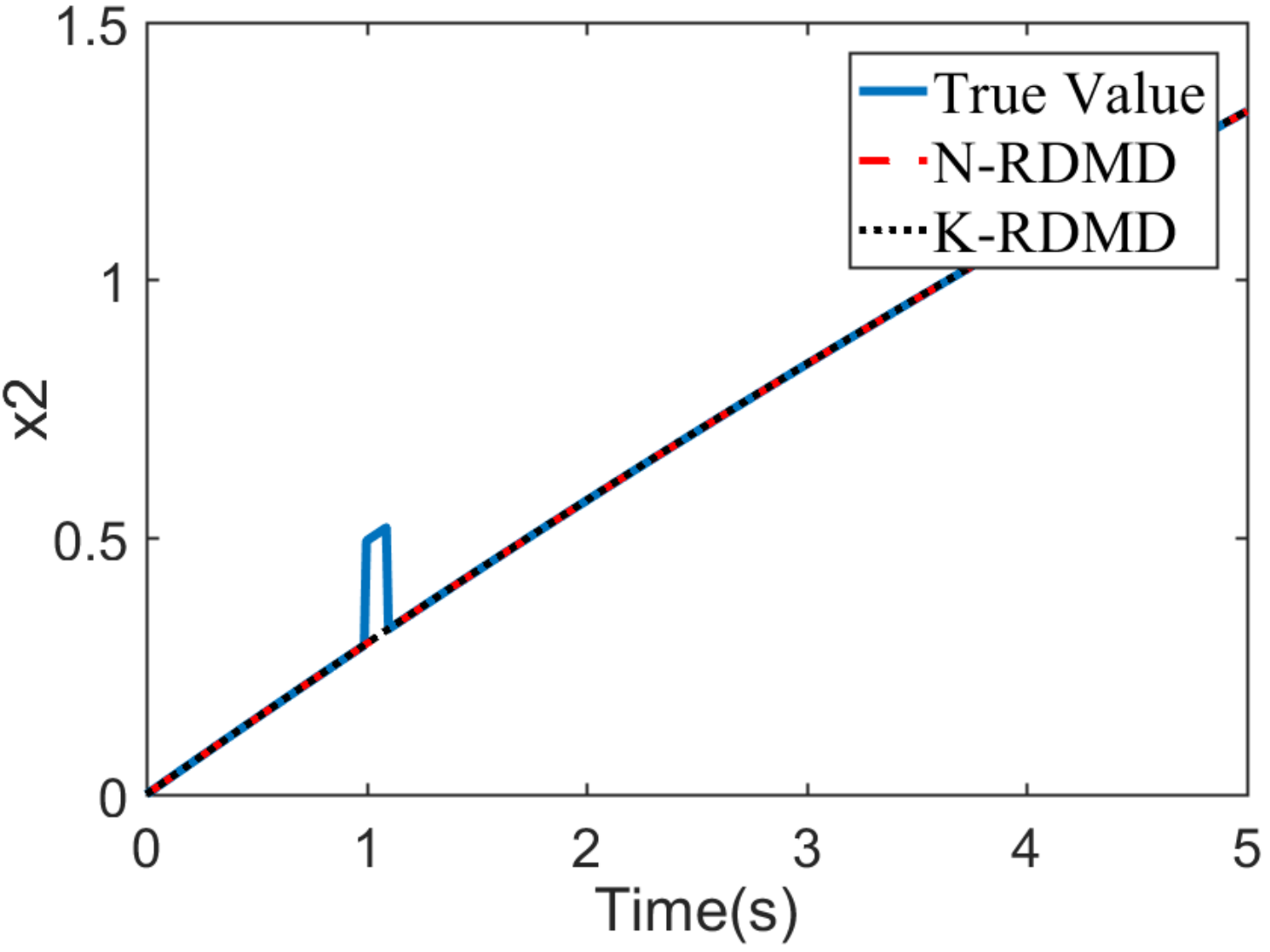}}
\subfloat[]{\includegraphics[width=0.25\textwidth]{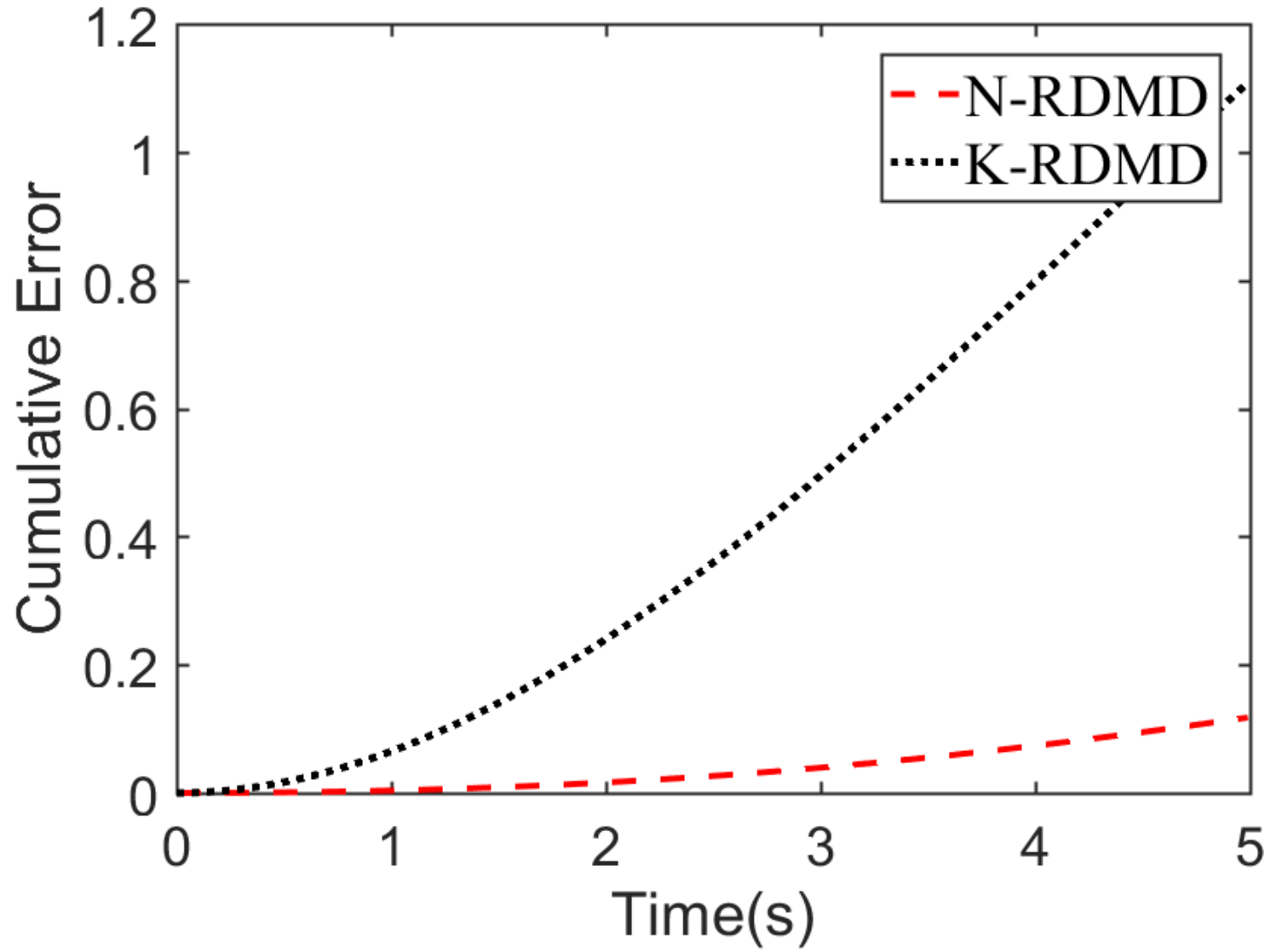}} \\
\caption{Ring network defined in (\ref{eq.ring}) with $s=15$ oscillators and $N=500$ snapshots. (a) Reconstruction of the angular velocity of the first oscillator in the presence of outliers of magnitude $0.2$ from $t=1$ to $t=1.1$ seconds. (b) Cumulative error of the state reconstruction.}
\label{fig.2x}
\end{figure}

\subsection{Linear System}\label{sec.b}
Consider the linear dynamical system governed by

\begin{align}
\dot{x}_{1} &= -x_{1} -3x_{2}, \nonumber \\ 
\dot{x}_{2} &= x_{1} +x_{2}. \label{eq.55x}
\end{align}

\begin{table}[!t]
\centering \scriptsize
\setlength{\tabcolsep}{0.4em}
\caption{Eigenvalues of (\ref{eq.55x}) calculated using various methods under three different scenarios} \vspace{-.5cm}
\begin{tabular}{l r r r} \\ \hline
& Outlier-free case & Case 2 & Case 3 \\ \hline
True value & $0.000\pm{j}1.4142$ & $ 0.000\pm{j}1.4142$ & $ 0.000\pm{j}1.4142$ \\ \hline
DMD \cite{Schmid2010} & $0.0000\pm{j}1.4142$ & $-0.1747\pm{j}1.3787$ & $-0.3574\pm{j}1.3115$ \\
TDMD \cite{Hemati2017} & $0.0000\pm{j}1.4142$ & $-0.0021\pm{j}1.4041$ & $-0.0063\pm{j}1.3643$ \\
ODMD \cite{Sinha2020} & $0.0000\pm{j}1.4142$ & $-0.2626\pm{j}1.3669$ & $-0.4834\pm{j}1.2817$ \\
N-RDMD & $0.0013\pm{j}1.4142$ & $-0.0027\pm{j}1.4131$ & $-0.0043\pm{j}1.4126$ \\ \hline
\end{tabular}
\label{tab.2}
\end{table}

\begin{figure}[!t]
\centering
\subfloat[]{\includegraphics[width=0.25\textwidth]{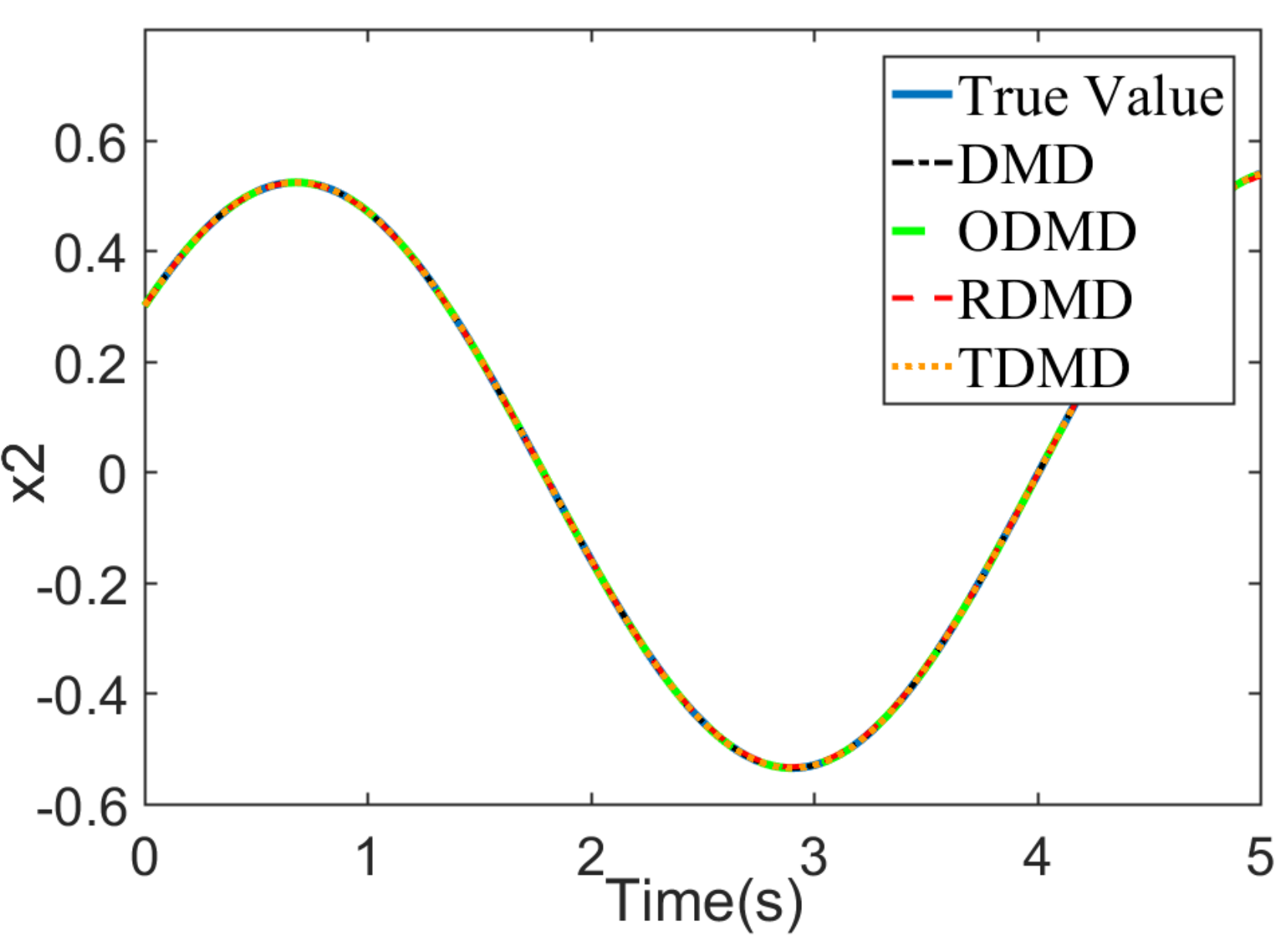}}
\subfloat[]{\includegraphics[width=0.25\textwidth]{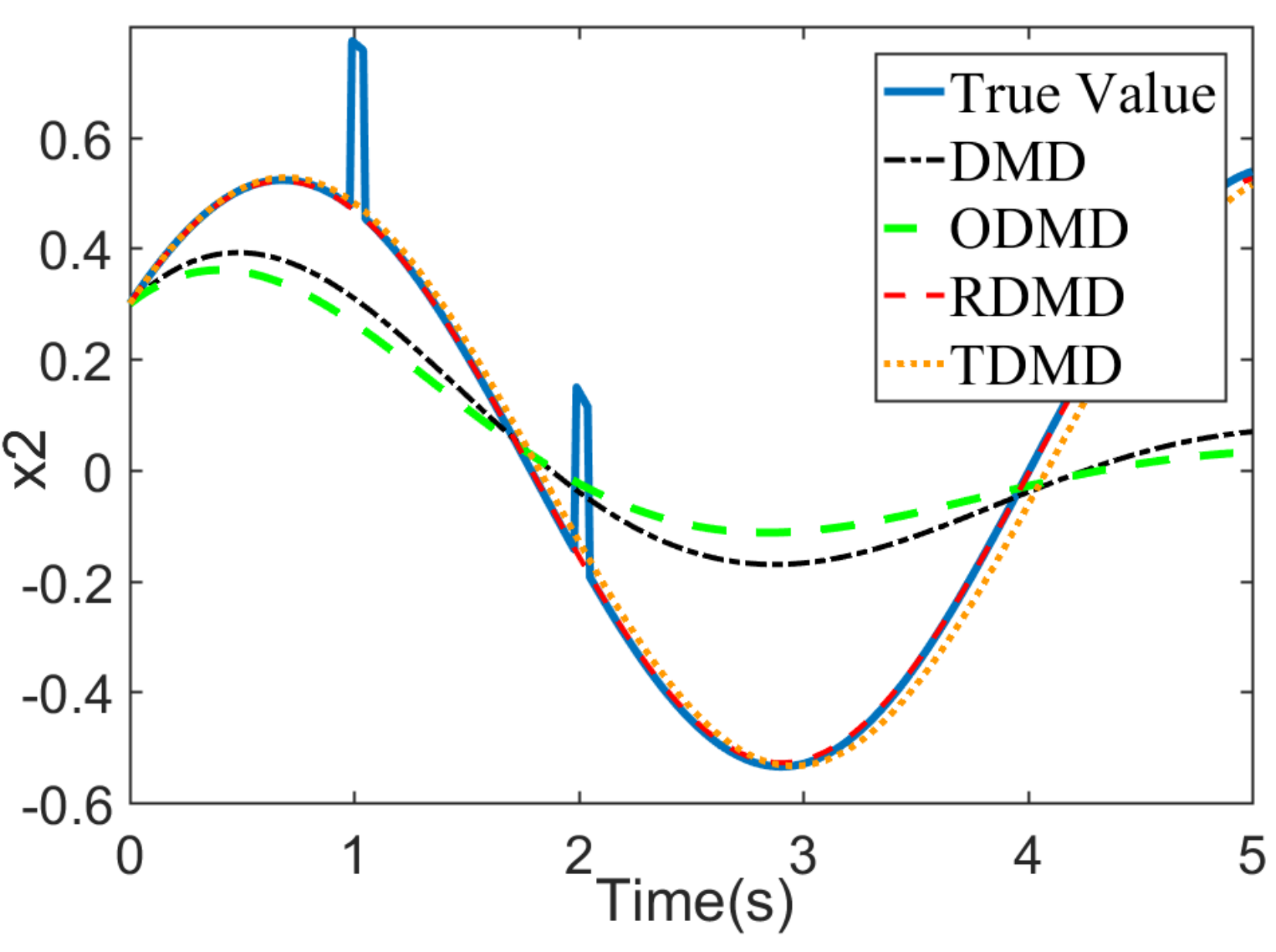}}
\caption{Reconstruction of $x_{2}$ for the linear system in (\ref{eq.55x}). (a) Outlier-free case. (b) Case with outliers of magnitude $0.3$ from $t=1$ to $t=1.05$ seconds and from $t=2$ to $t=2.05$ seconds.}
\label{fig.3x}
\end{figure}

We investigate three cases as follows: 

\begin{enumerate}
\item the sampled data are free of outliers;
\item the sampled data are contaminated with outliers of magnitude $0.3$ from $t=1$ to $t=1.05$ seconds;
\item the sampled data are contaminated with outliers of magnitude $0.3$ from $t=1$ to $t=1.05$ seconds and from $t=2$ to $t=2.05$ seconds.
\end{enumerate}
Table \ref{tab.2} provides the eigenvalues of $\bm{\widehat{A}}$ computed for each case. Furthermore, the reconstruction of $x_{2}$ in Cases 1 and 3 is depicted in Fig. \ref{fig.3x}. Table \ref{tab.2} shows that the accuracy of the RDMD is slightly less than that of other DMD methods in the outlier-free case. As discussed in previous sections, the RDMD presents high statistical efficiency under ideal scenarios while being robust to deviations from assumptions about the data. Indeed, the RDMD performs best for higher percentages of outliers.

\subsection{Network of Coupled Oscillators}
Next, we consider the network of coupled oscillators defined in (\ref{eq.ring}). Figs. \ref{fig.4x} and \ref{fig.5x} show the performance of the considered DMD methods. Fig. \ref{fig.4x} shows that all methods capture the dominant eigenvalue of the system in an ideal case without outliers. Further, all methods yield an accurate reconstruction of state $x_{2}=\omega_{2}$. Conversely, Fig. \ref{fig.5x} shows that the N-RDMD performs best when the sampled data are contaminated with outliers of magnitude $0.1$ between $t=1$ and $t=1.05$ seconds.

\begin{figure}[!t]
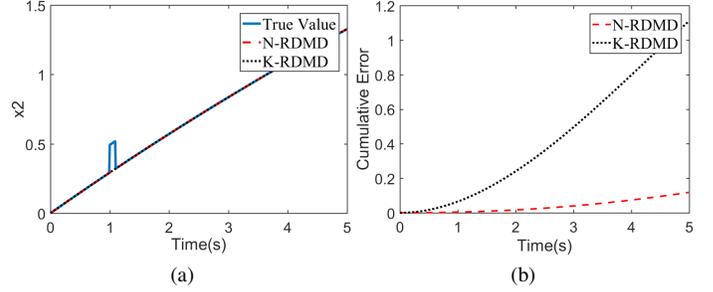

\centering
\subfloat[]{\includegraphics[width=0.25\textwidth]{images/2a_new.pdf}}
\subfloat[]{\includegraphics[width=0.25\textwidth]{images/2b_new.pdf}}
\caption{Network of coupled oscillators when the sampled data are outlier free: (a) eigenvalues; (b) reconstruction of the state $x_{2}$.}
\label{fig.4x}
\end{figure}

\begin{figure}[!t]
\centering
\subfloat[]{\includegraphics[width=0.25\textwidth]{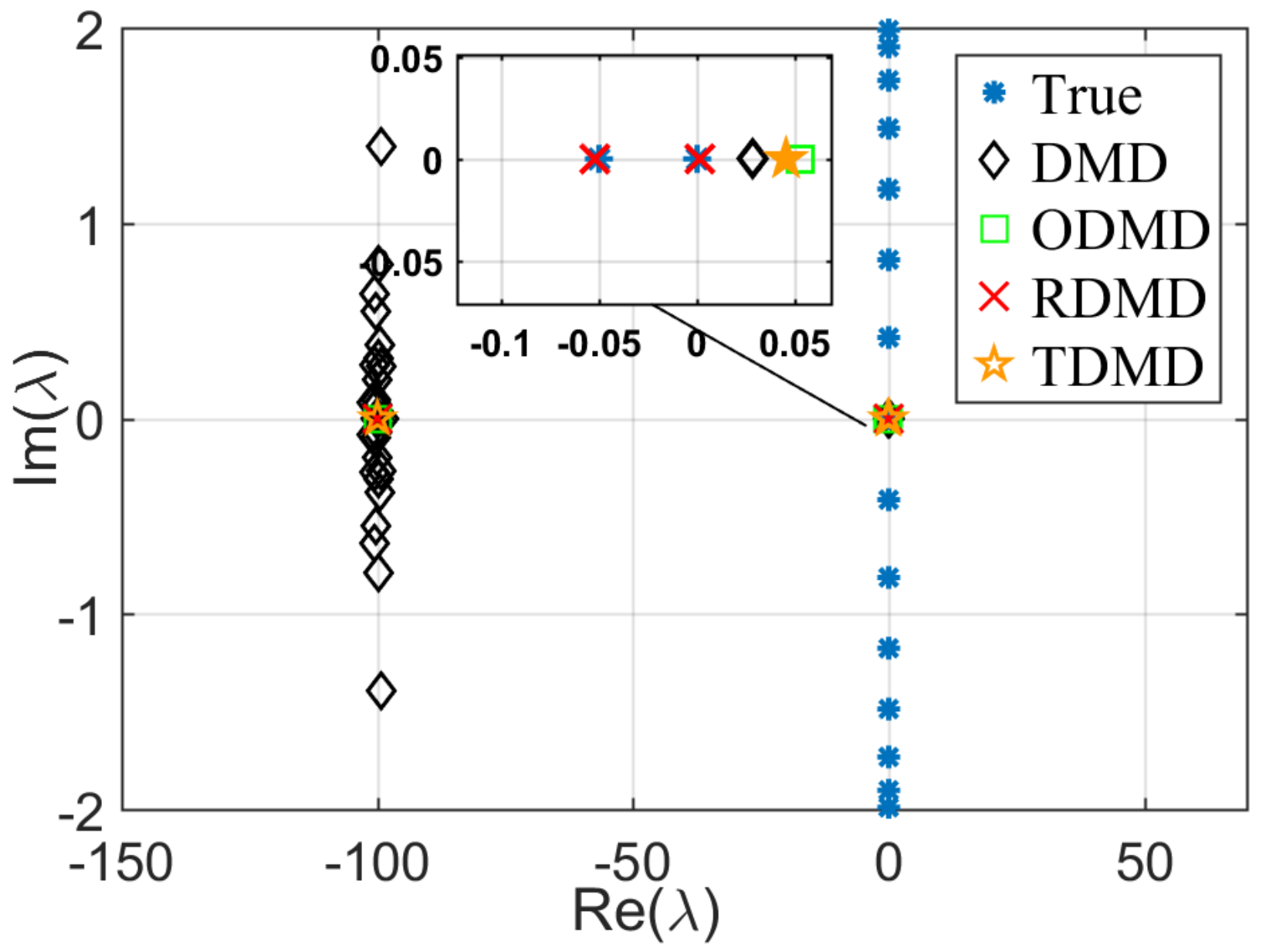}}
\subfloat[]{\includegraphics[width=0.25\textwidth]{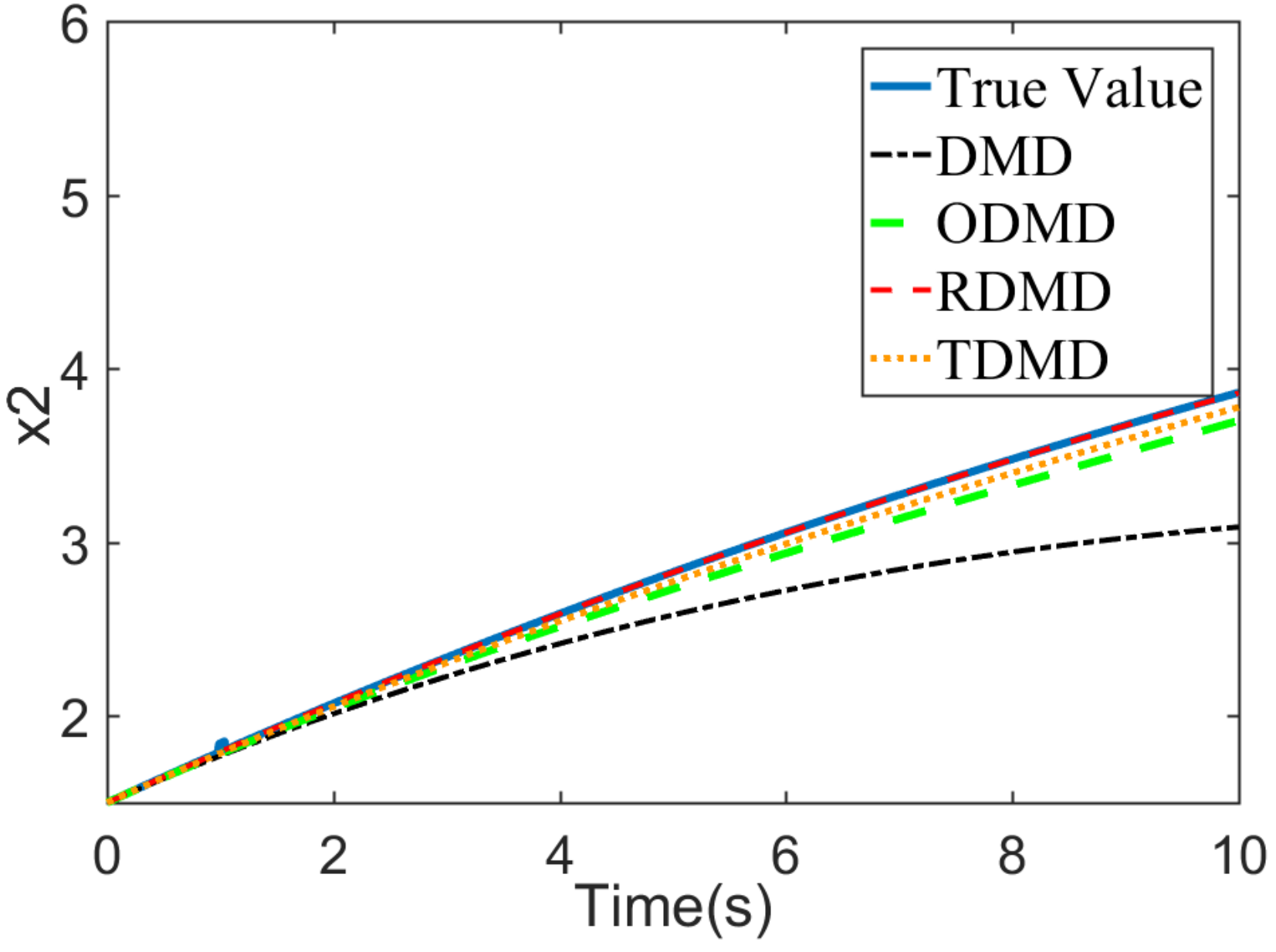}}
\caption{Network of coupled oscillators when the sampled data are contaminated with outliers of magnitude $0.1$ between $t=1$ and $t=1.05$ seconds: (a) eigenvalues; (b) reconstruction of the state $x_{2}$.}
\label{fig.5x}
\end{figure} 

\subsection{Slow-Manifold System}
Now, consider a slow-manifold system given by

\begin{align}
\dot{x}_{1} &= \mu x_{1}, \nonumber \\ 
\dot{x}_{2} &= \lambda \left(x_{2} -\rho(x_{1})\right), \label{eq.59x}
\end{align}

\noindent
with $\mu=-0.05$, $\lambda=-1$, and $\rho(x_{1})=x_{1}^{2}$
{\color{black}(\cite{kaiser2021data}).}
In (\ref{eq.59x}), the slow dynamics are dictated by the eigenvalue equal to $-0.05$, whereas the second state $x_{2}$ quickly approaches the manifold $x_{2}=\rho(x_{1})$. In this case, we simulate a case where the sampled data are contaminated with outliers of magnitude $0.2$ between $t=1$ and $t=1.05$ seconds and between $t=2$ and $t=2.05$ seconds. Fig. \ref{fig.6x} depicts the estimated eigenvalues. We observe that the N-RDMD performs best in capturing the eigenvalue equal to $-0.05$ and that corresponds to the slow dynamics.

\begin{figure}[!t]
\centering
\subfloat[]{\includegraphics[width=0.25\textwidth]{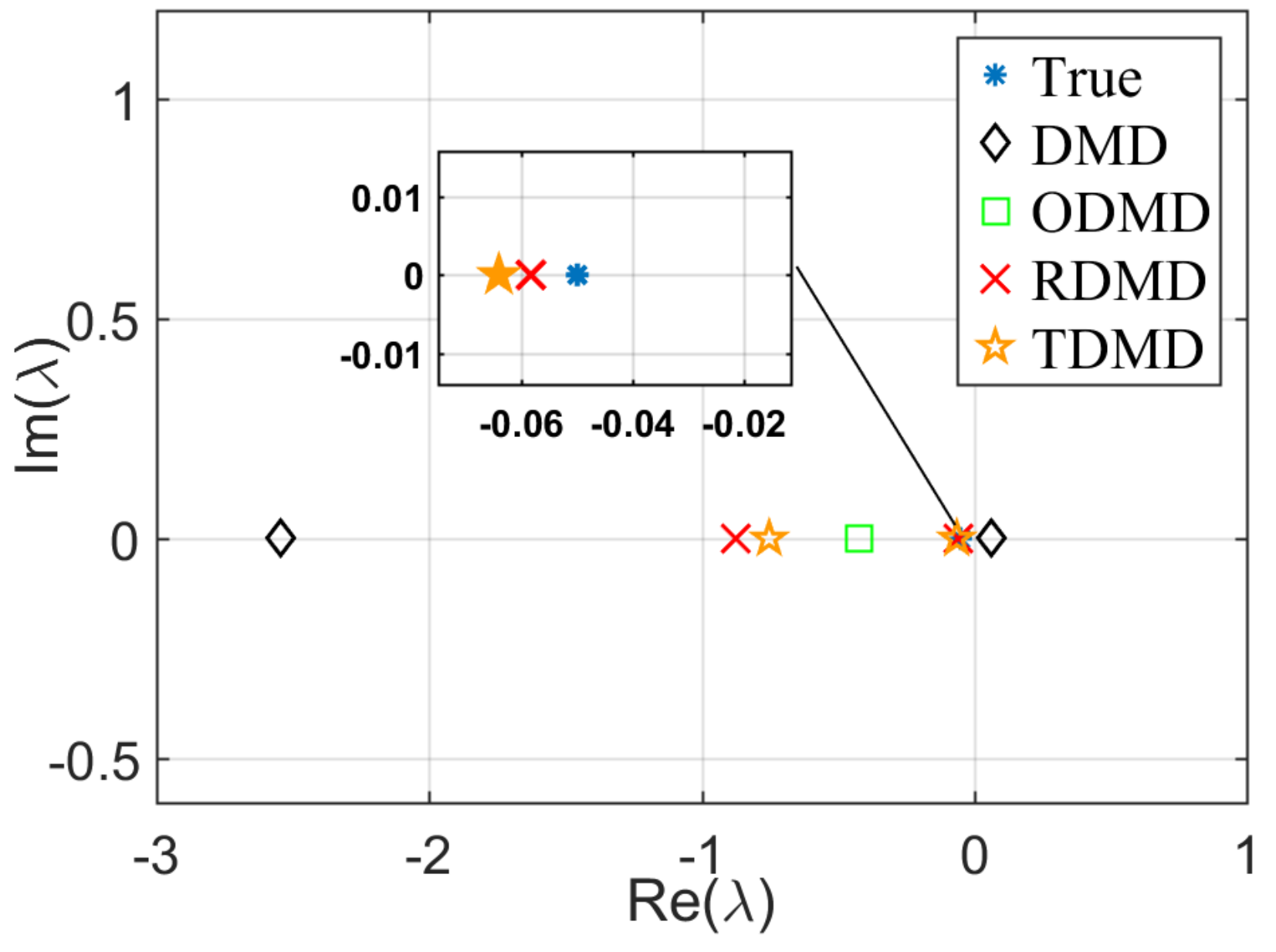}}
\subfloat[]{\includegraphics[width=0.25\textwidth]{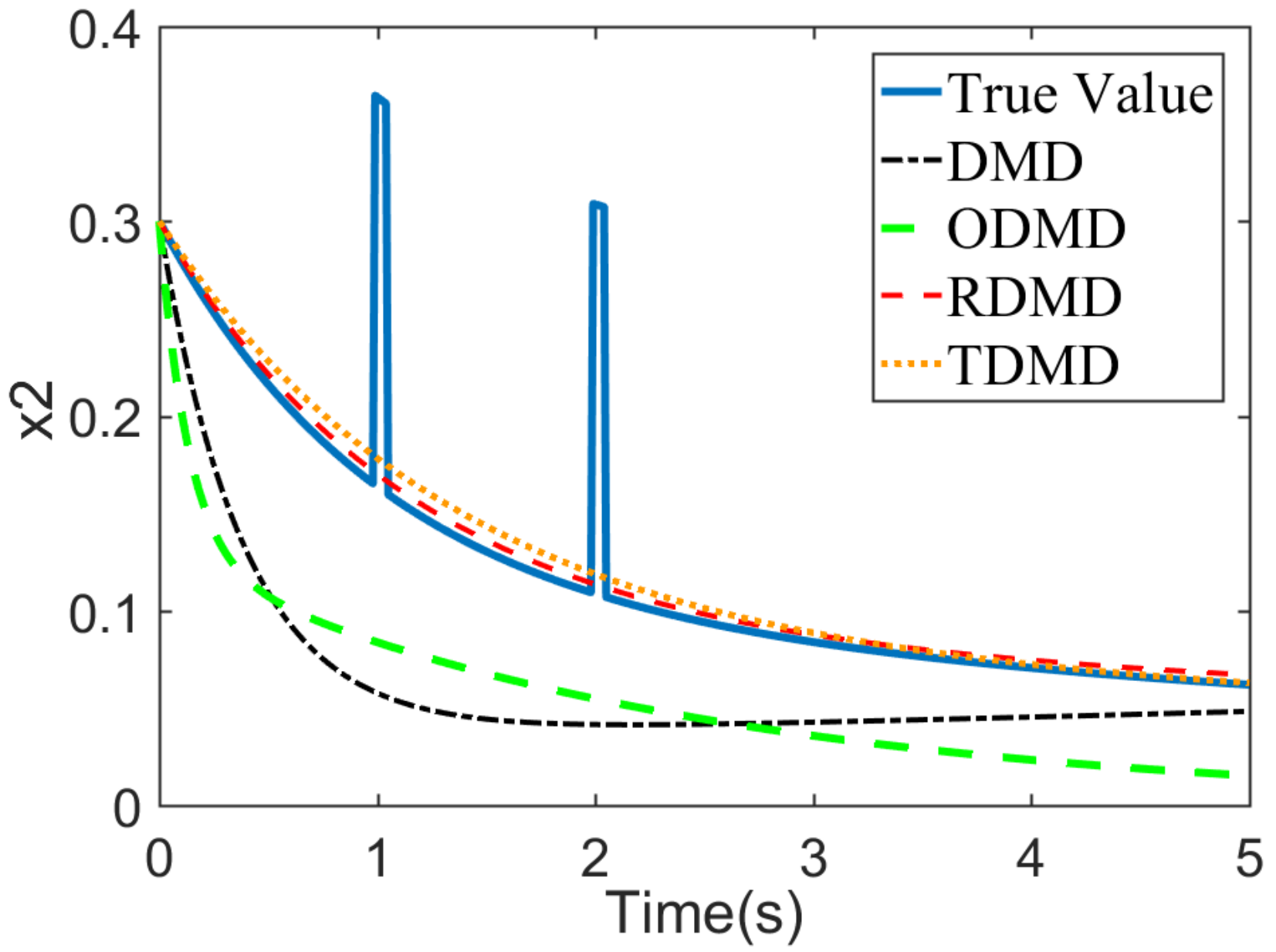}}
\caption{Slow-manifold system when the sampled data are contaminated with outliers of magnitude $0.2$ between $t=1$ and $t=1.05$ seconds and between $t=2$ and $t=2.05$ seconds. (a) Eigenvalues. (b) Reconstruction of state $x_{2}$.}
\label{fig.6x}
\end{figure} 

\begin{figure}[!t]
\centering
\subfloat[]{\includegraphics[width=0.25\textwidth]{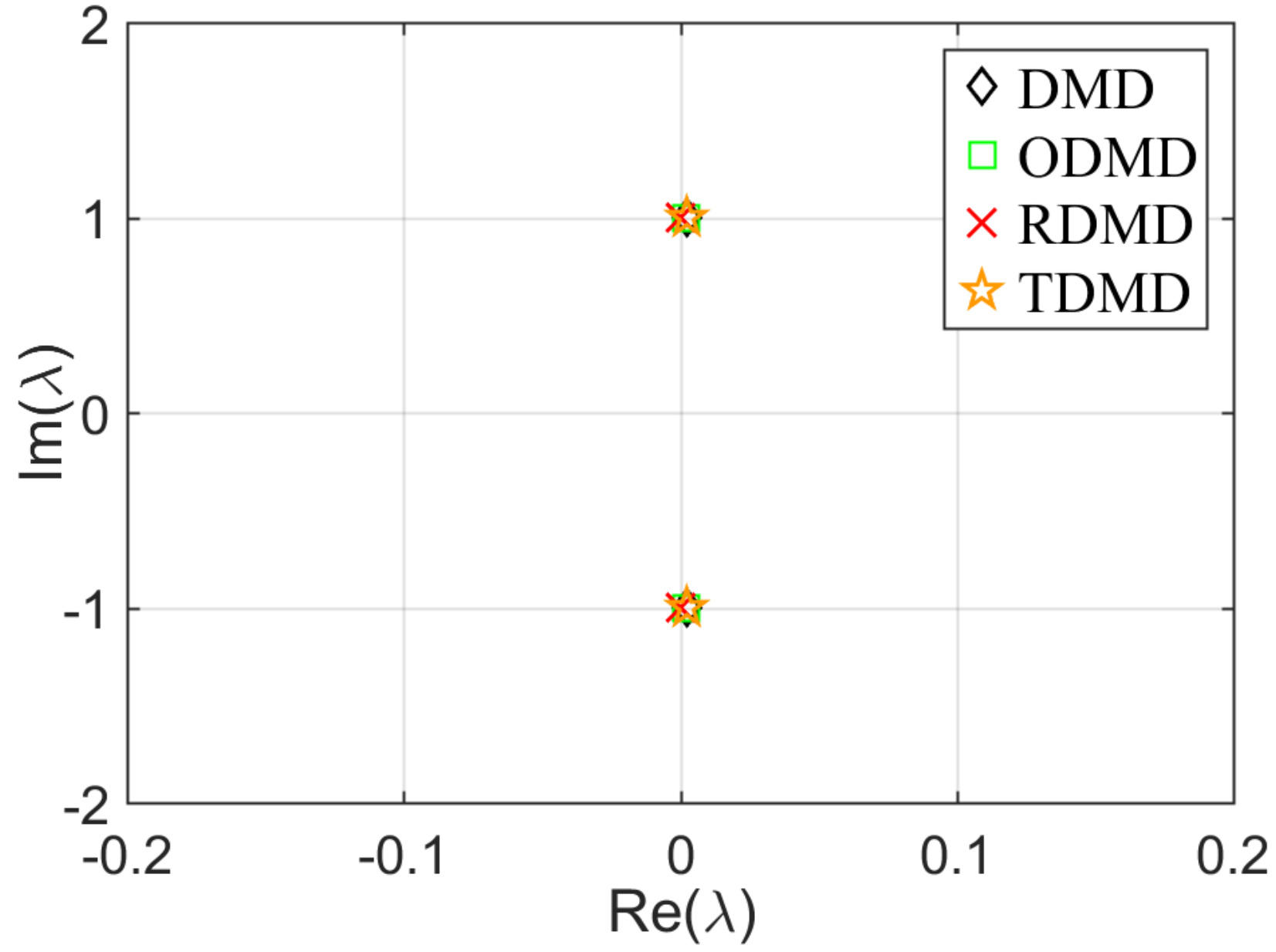}}
\subfloat[]{\includegraphics[width=0.25\textwidth]{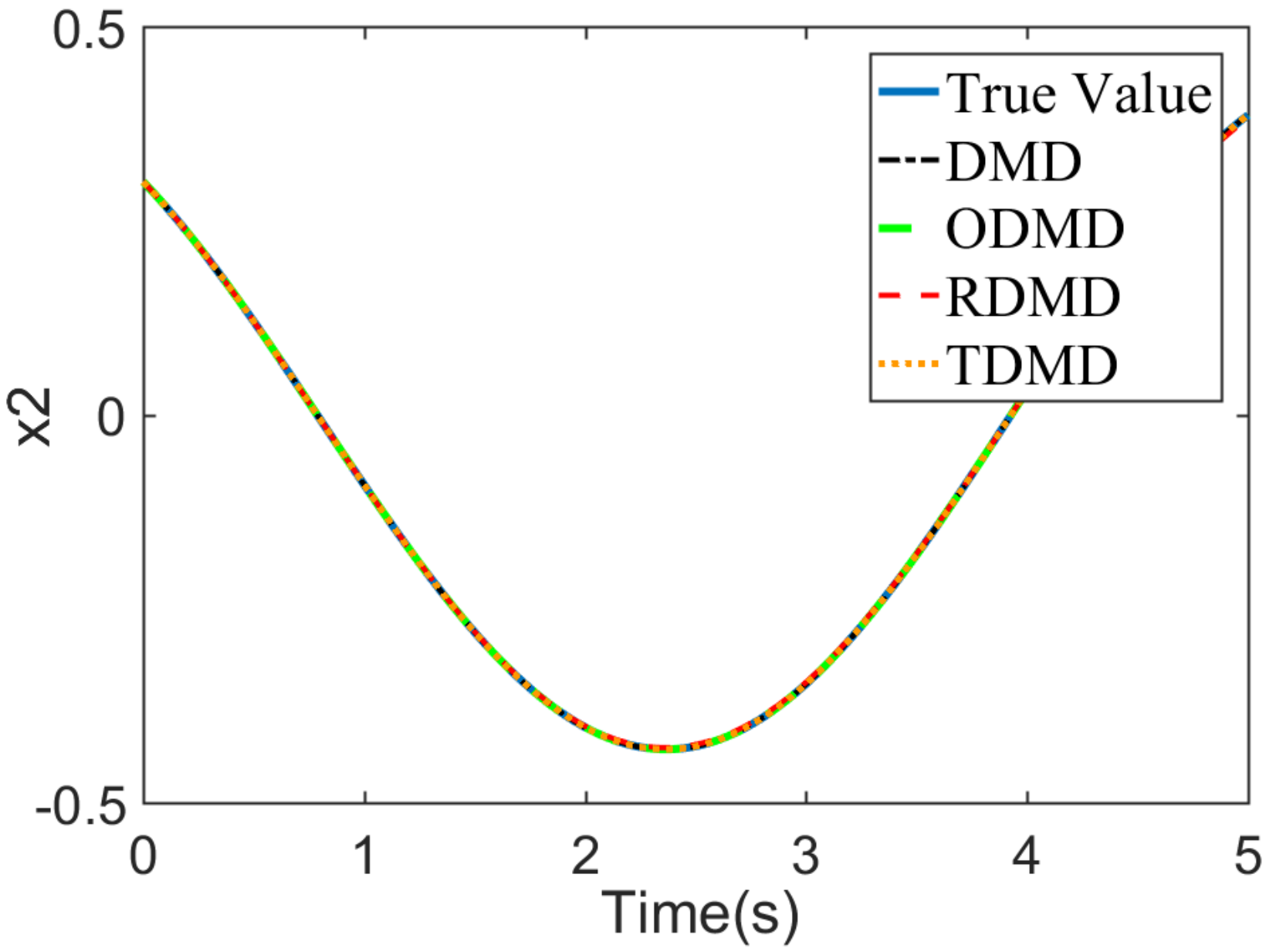}} \\
\subfloat[]{\includegraphics[width=0.25\textwidth]{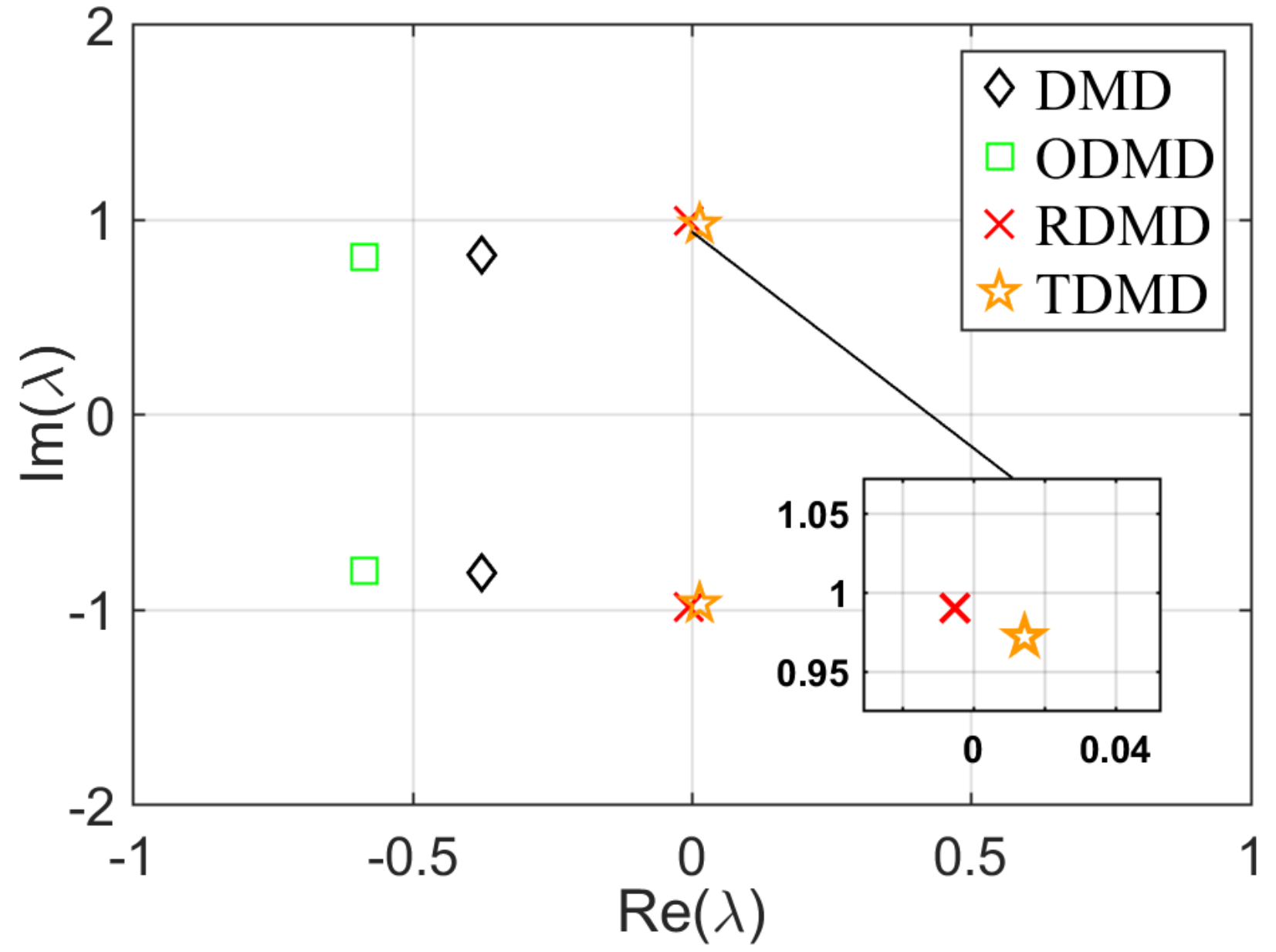}}
\subfloat[]{\includegraphics[width=0.25\textwidth]{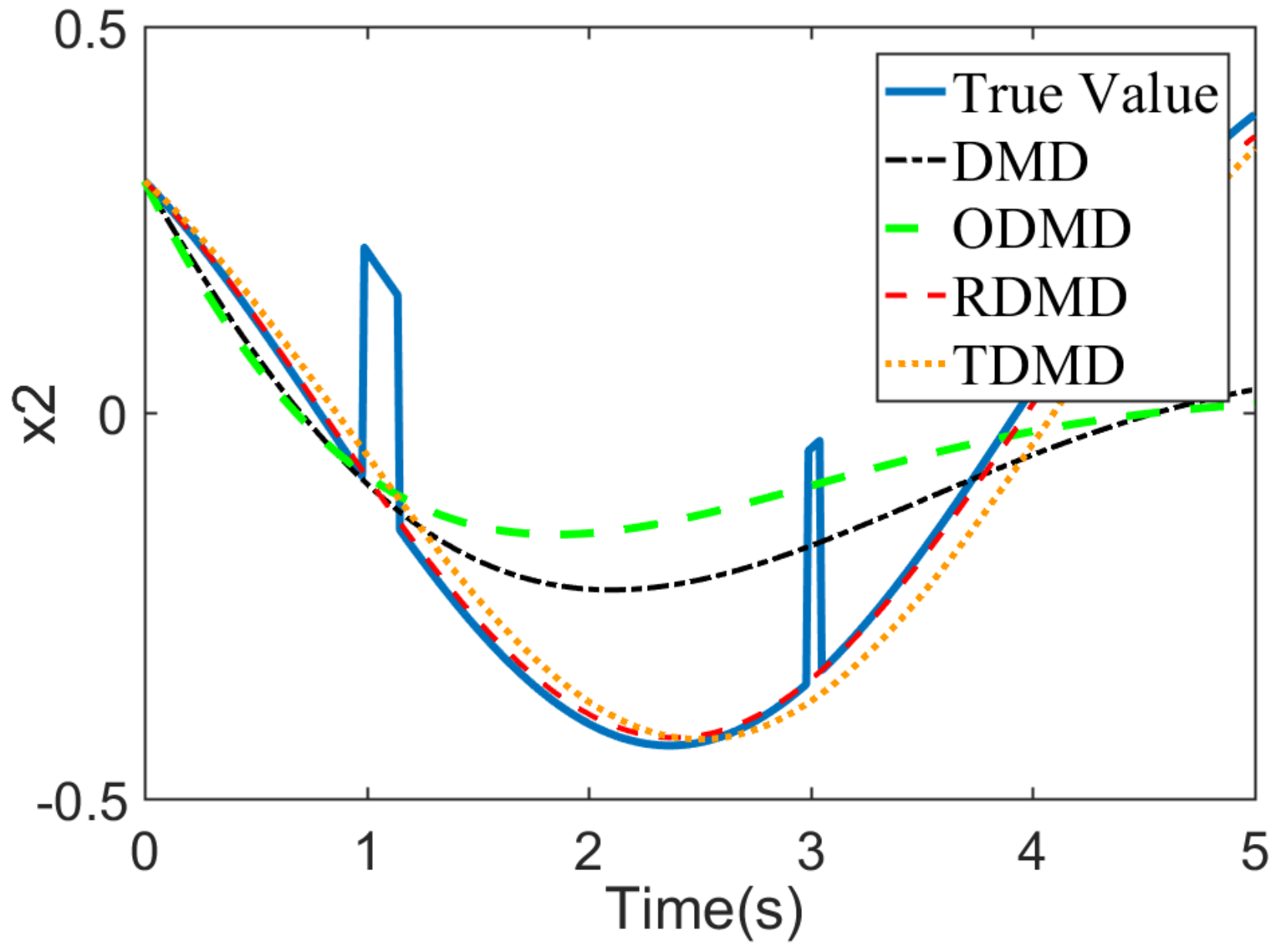}}
\caption{Estimated eigenvalues and reconstruction of state $x_{2}$ for the Van der Pol oscillator in (\ref{eq.VanderPol}):  (a), (b) outlier-free data; (c), (d) sampled data contaminated with outliers of magnitude $0.3$ from $t=1$ to $t=1.15$ seconds and from $t=3$ to $t=3.05$ seconds.}
\label{fig.7x}
\end{figure} 

\subsection{Van der Pol Oscillator}
Finally, we consider the Van der Pol oscillator given by:

\begin{align}
\dot{x}_{1} &= x_{2}, \nonumber \\ 
\dot{x}_{2} &= \mu \left(1-x_{1}^{2}\right)x_{2}, \label{eq.VanderPol}
\end{align}

\noindent
and the results are depicted in Fig. \ref{fig.7x}. Note that for the data collected from nonlinearly evolving signals, the approximated DMD modes are reflecting the behavior of the most dominant Koopman modes. As shown in Fig. \ref{fig.7x}, when there is no outlier contamination within the data set, all considered DMD methods calculate approximately the same dominant eigenvalues, and a good response of $x_{2}$ is reconstructed; however, when there are some outliers among the data set, only N-RDMD can capture the same eigenvalues as found for the outlier-free data. In other words, the process of finding eigenvalues has been made robust against outliers.

\subsection{Dynamic Mode Decomposition of Large Data Sets}\label{sec.f}
Next, we assess the performance of the N-RDMD for larger data sets. First, we consider a random linear system of the form given by

\begin{equation}\label{eq.RandomLinear}
\bm{\dot{x}}(t)=\bm{\Theta}\bm{x}(t),
\end{equation}

\noindent
where $\bm{\Theta}$ is a random matrix of the form $\text{randn}(m,m)-h\bm{I}_m$, and $h$ is chosen such that all eigenvalues reside in the left half of the complex Cartesian plane. The data are collected for $N=200$ time samples. The order of the truncated dynamics is considered to be $c^{\prime}=25$ for all the simulated DMD methods. Note that ODMD \cite{Sinha2020} is not included in this case because no solution can be attained. Fig. \ref{fig.8x} shows that the exact DMD and TDMD methods perform best when the data are outlier free; however, the N-RDMD outperforms other methods when the data are contaminated with outliers. Note that the TDMD estimates the eigenvalues with positive real parts, indicating  unstable dynamics; therefore, we do not plot the TDMD response in Fig. \ref{fig.8x}(d). 

As a second example, consider a generalized representation of a slow-manifold nonlinear system given by

\begin{align}
\dot{\bm{x}}_{1} &= \bm{W}\bm{x}_{1}, \nonumber \\ 
\dot{\bm{x}}_{2} &= \bm{\Lambda} \left(\bm{x}_{2} -\bm{P}\left(\bm{x}_{1}\right)\right), \label{eq.GeneralizedSlowManifold}
\end{align}

\noindent
where $\bm{x}_{1}$, $\bm{x}_{2}\in\mathbb{R}^{m/2}$, $\bm{W}=\text{diag}\left(\mu_{1},\,...,\,\mu_{m/2}\right)$, $\bm{\Lambda}=\text{diag}\left(\lambda_{1},\,...,\,\lambda_{m/2}\right)$, $\mu_{i}<0$, $\lambda_{i}>0$, $i=\{1,...,m/2\}$, and $\bm{P}(\bm{x}_{1})= [\bm{P}_{1}(\bm{x}_{1})\; ...\; \bm{P}_{m/2}(\bm{x}_{1})]\tran$. Here, we choose $\mu_{i}$ as negative random numbers, $\bm{\Lambda}=\bm{I}_{m/2}$, $\bm{P}_{i}(\bm{x}_{1})=\left(\sum_{j=1}^{m/2}\bm{x}_{1j}\right)^{2}$, and $m=500$.

The results depicted in Fig. \ref{fig.9x} demonstrate the ability of all DMD methods to capture the dominant eigenvalues, located at $\mu_{i}$, and to reconstruct the state $x_{2}$ when the data are outlier free. Further, as shown in Fig. \ref{fig.9x}, the N-RDMD outperforms the DMD and the TDMD when the sampled data are contaminated with outliers of magnitude $0.1$ from $t=1$ to $t=1.05$ seconds.

\begin{figure}[!t]
\centering
\subfloat[]{\includegraphics[width=0.25\textwidth]{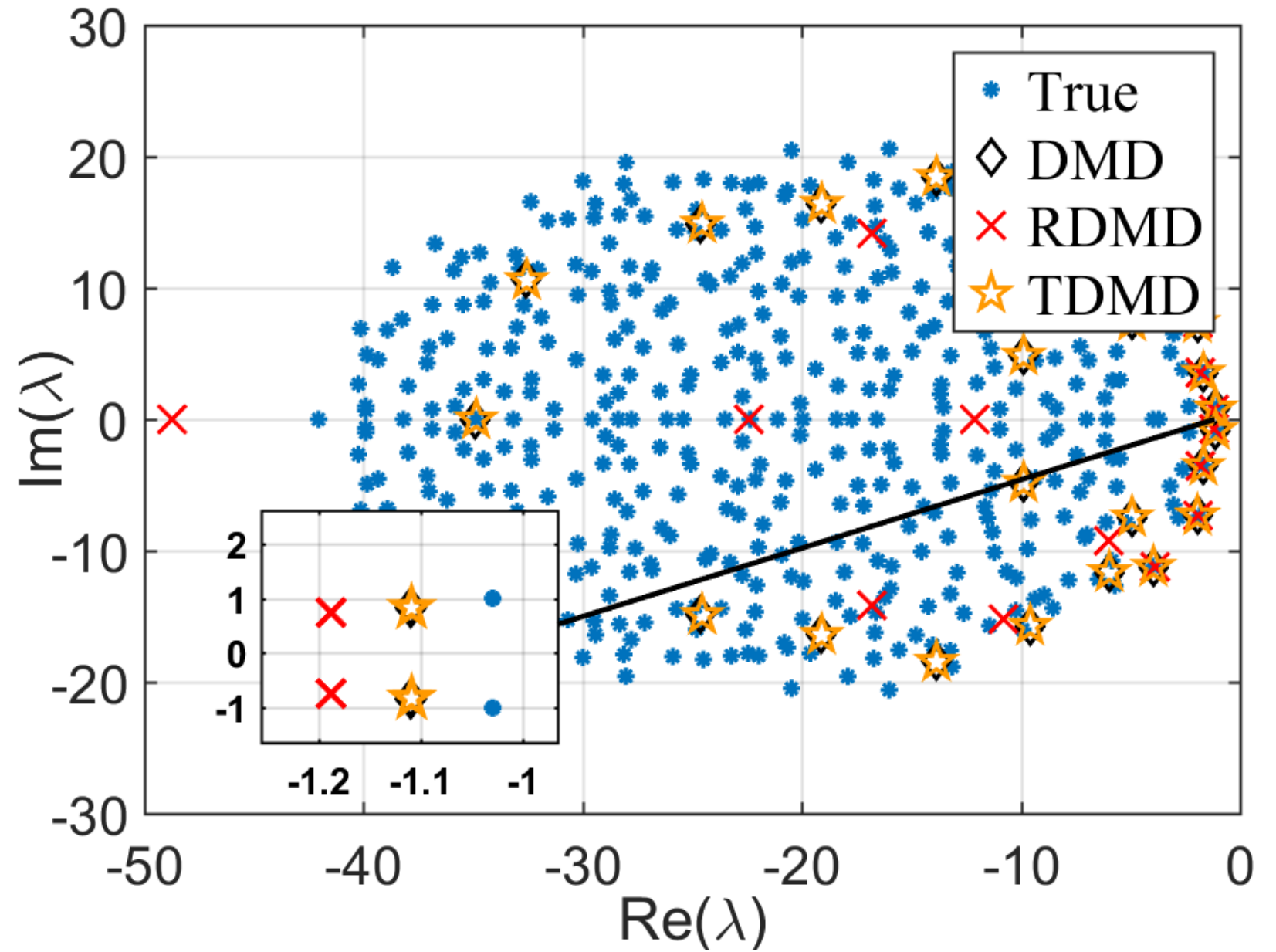}}
\subfloat[]{\includegraphics[width=0.25\textwidth]{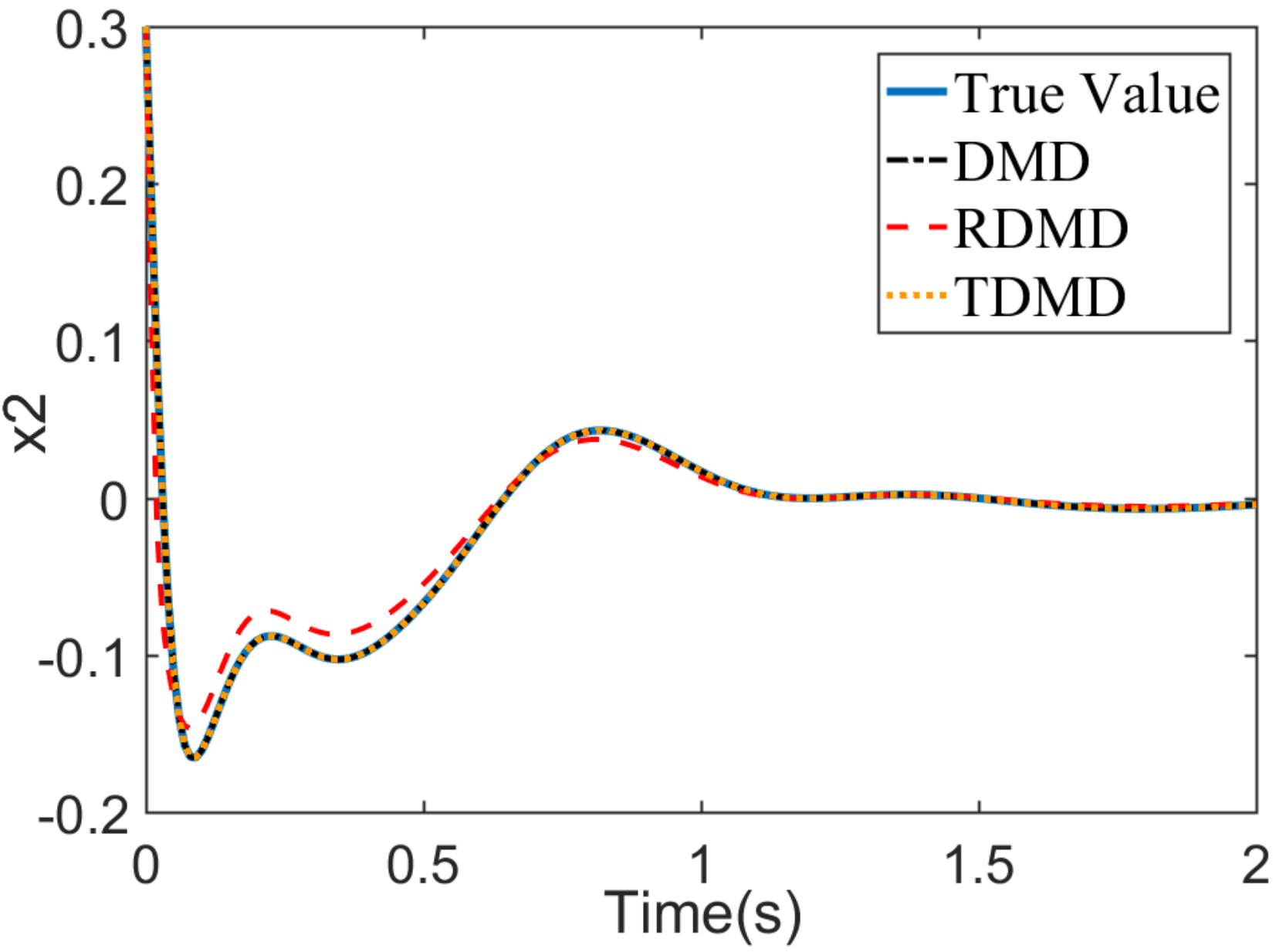}} \\
\subfloat[]{\includegraphics[width=0.25\textwidth]{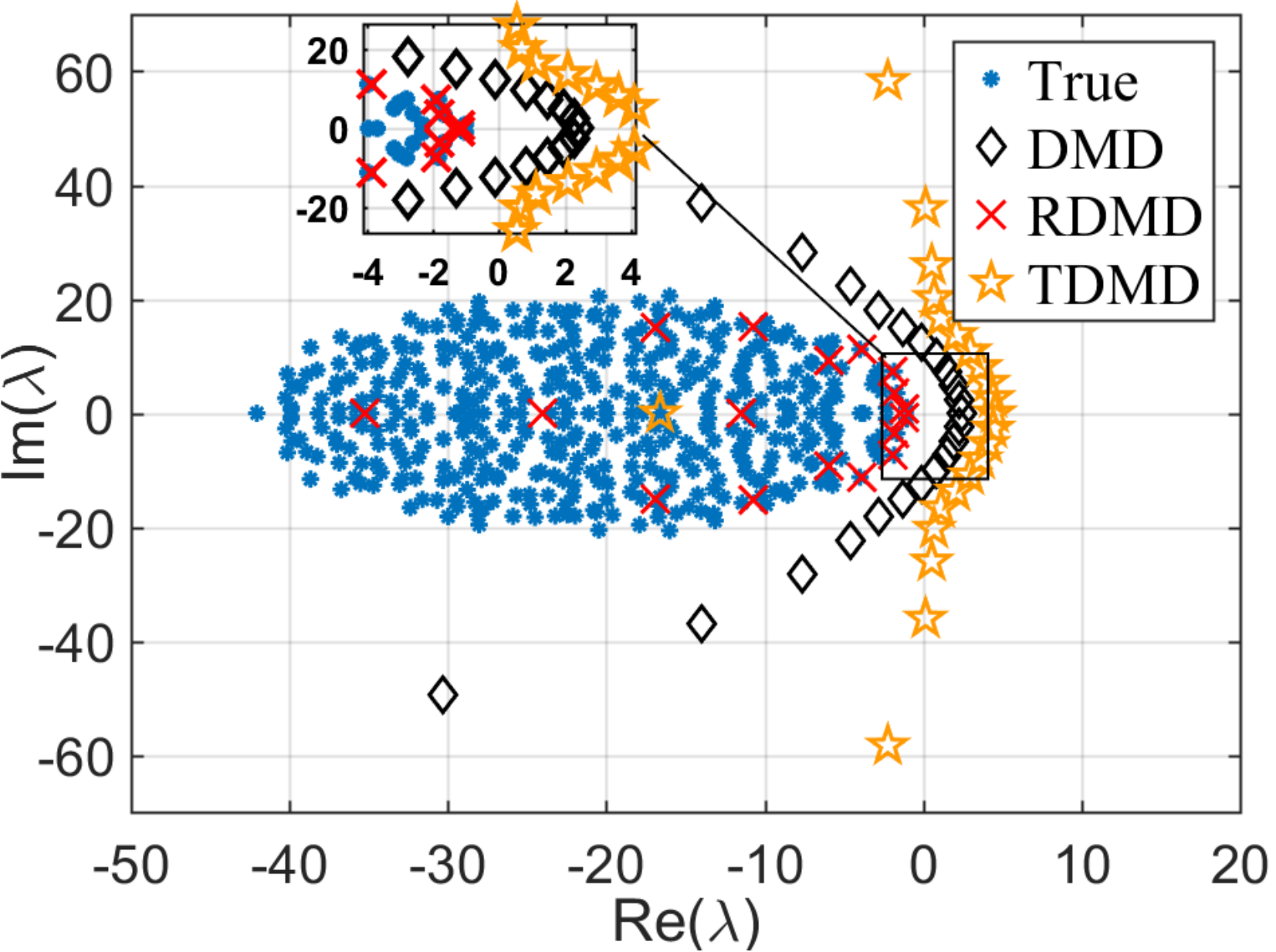}}
\subfloat[]{\includegraphics[width=0.25\textwidth]{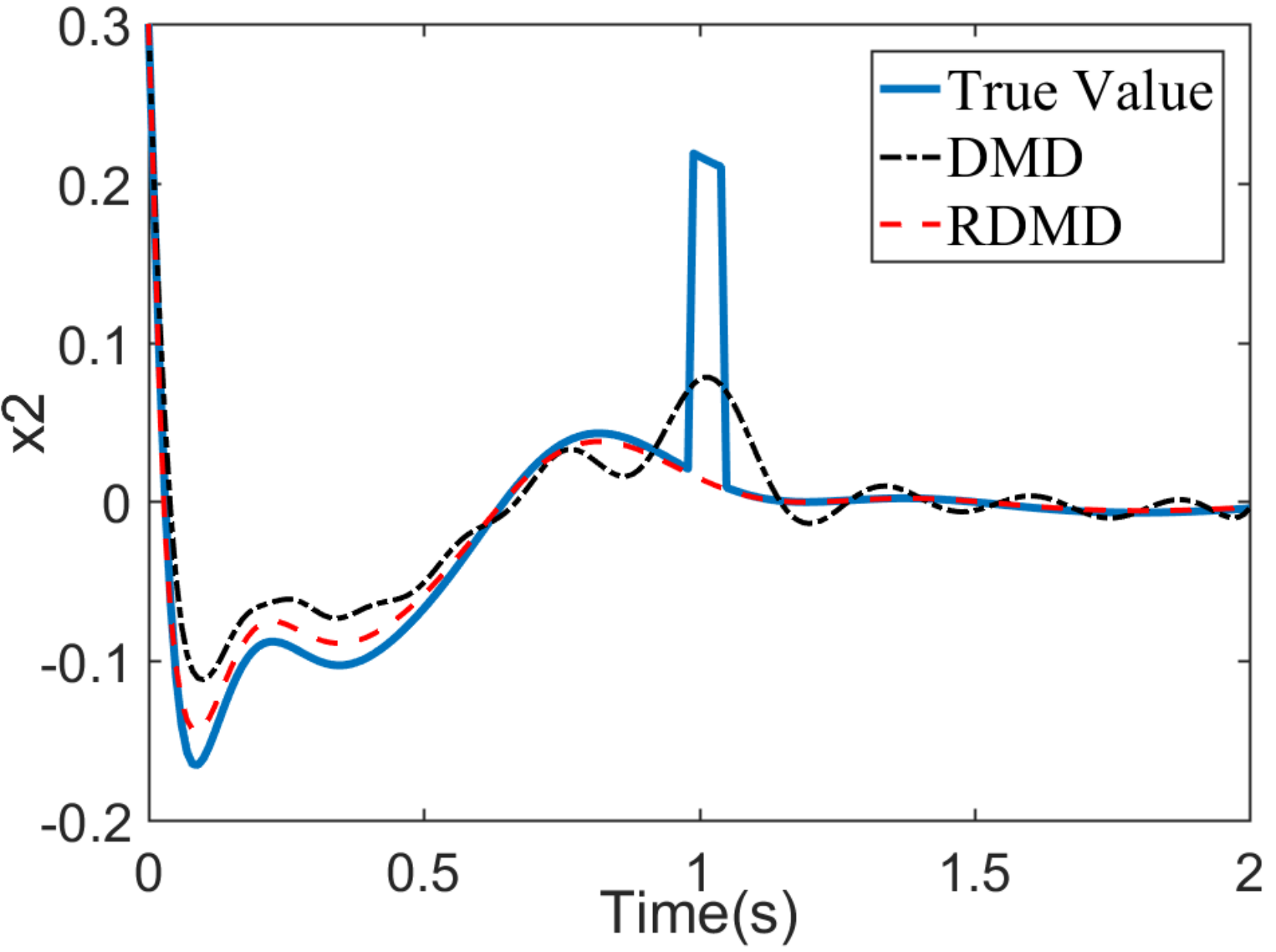}}
\caption{Estimated eigenvalues and reconstruction of state $x_{2}$ for the random linear dynamical system in (\ref{eq.RandomLinear}): (a), (b) outlier-free data; (c), (d) sampled data contaminated with outliers of magnitude $0.2$ from $t=1$ to $t=1.05$ seconds.}
\label{fig.8x}
\end{figure}

\begin{figure}[!t]
\centering
\subfloat[]{\includegraphics[width=0.25\textwidth]{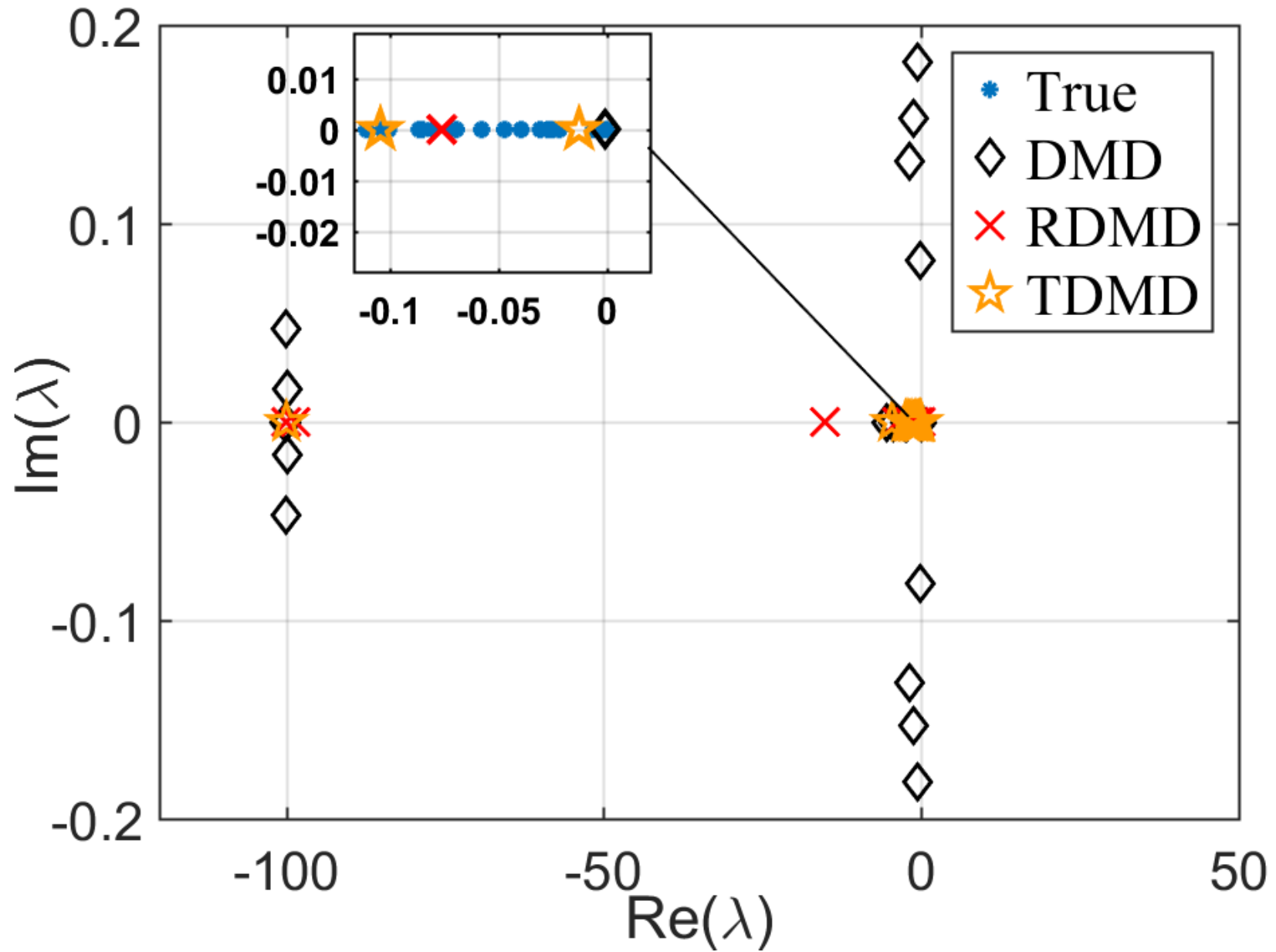}}
\subfloat[]{\includegraphics[width=0.25\textwidth]{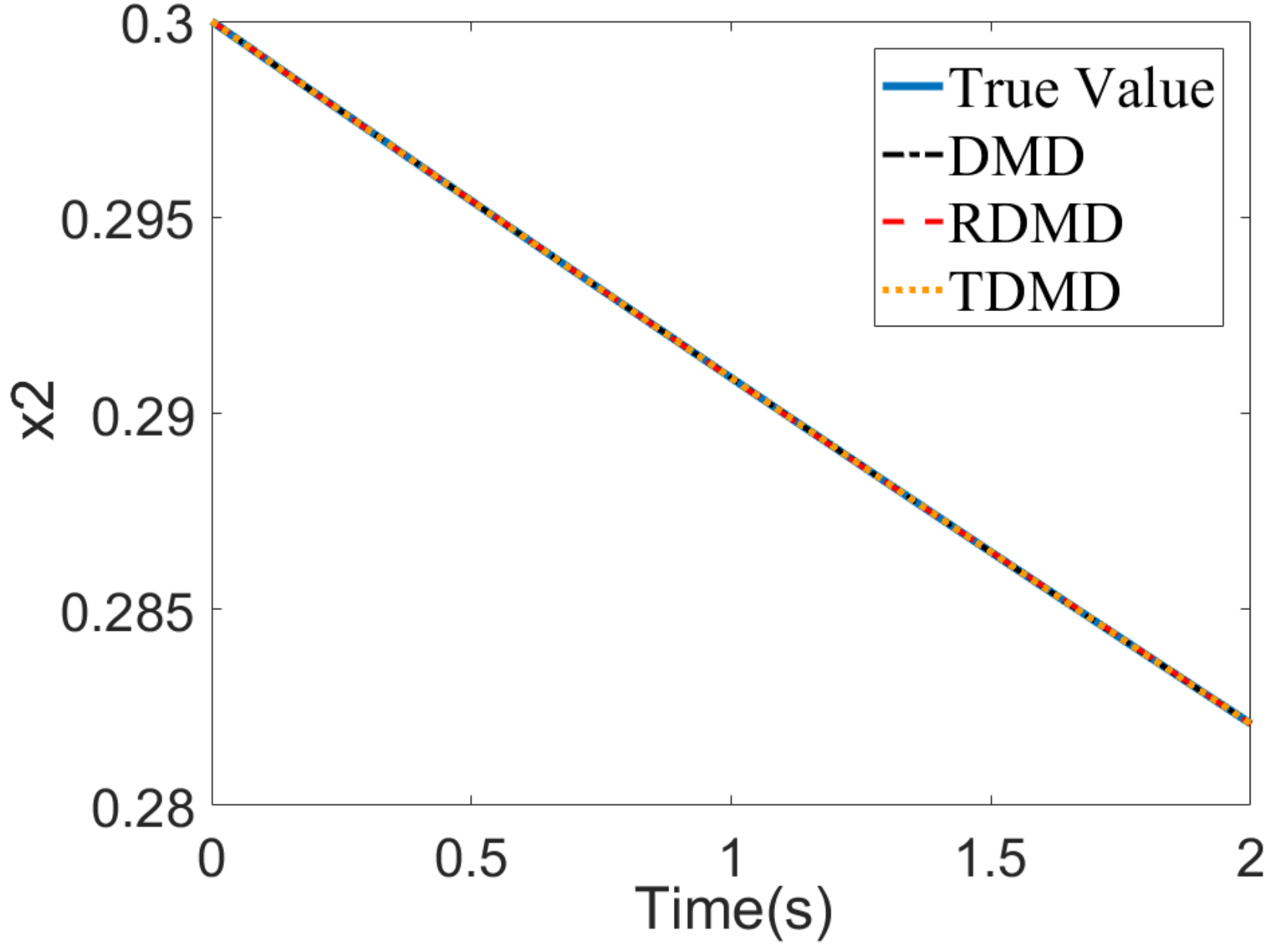}} \\
\subfloat[]{\includegraphics[width=0.25\textwidth]{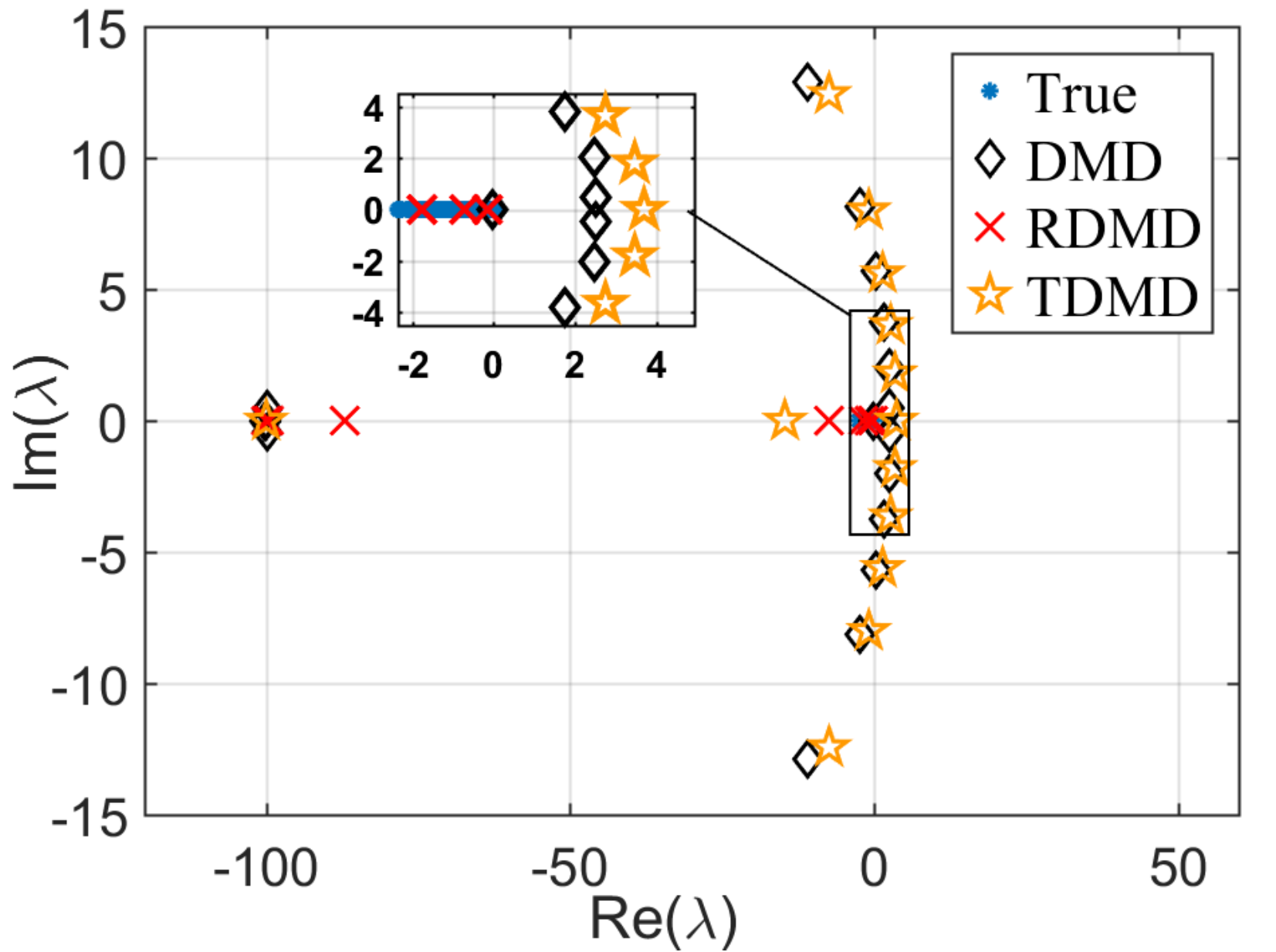}}
\subfloat[]{\includegraphics[width=0.25\textwidth]{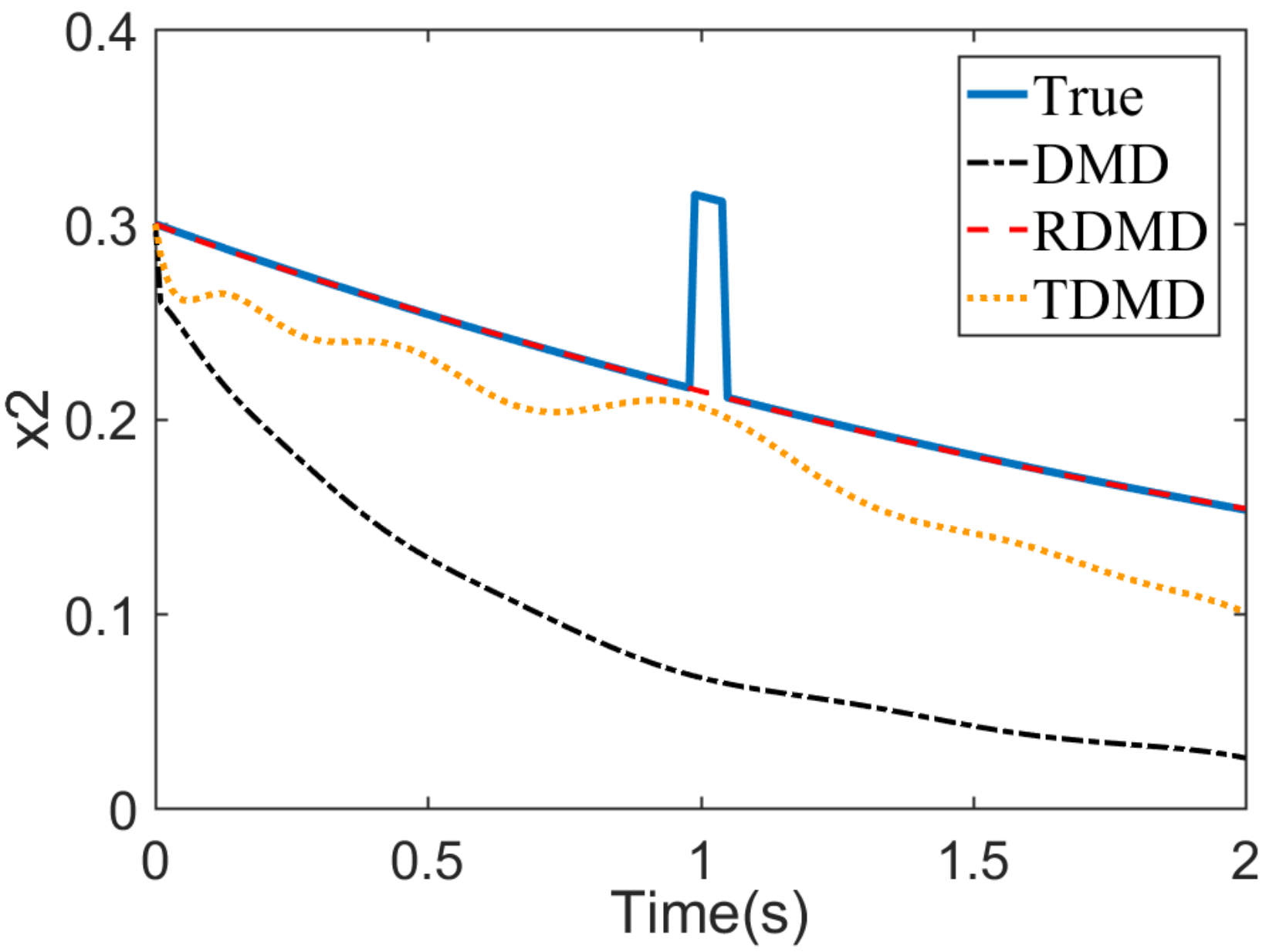}}
\caption{Estimated eigenvalues and reconstruction of state $x_{2}$ for the generalized slow-manifold system in (\ref{eq.GeneralizedSlowManifold}): (a), (b) outlier-free data;  (c), (d) Sampled data contaminated with outliers of magnitude $0.2$ from $t=1$ to $t=1.05$ seconds.}
\label{fig.9x}
\end{figure}

\subsection{Computation Time}
A comparison of the computation time for all DMD methods applied on the examples in sections \ref{sec.b} to \ref{sec.f} is provided in Table \ref{tab.3}. Compared to other methods, the N-RDMD has a higher computation time, the vast majority of which is spent computing projection statistics. This is essentially the price to pay for having statistical robustness. 

\begin{table}[!t]
\centering \scriptsize
\setlength{\tabcolsep}{0.7em}
\caption{Computation time (s) of the tested DMD methods} \vspace{-.5cm}
\begin{tabular}{l c c c c c c} \\ \hline
Method & \textit{B} & \textit{C} & \textit{D} & \textit{E} & \textit{F1} & \textit{F2} \\ \hline
DMD \cite{Schmid2010} & $0.008$ & $0.022$ & $0.007$ & $0.008$ & $0.024$ & $0.027$ \\
TDMD \cite{Hemati2017} & $0.010$ & $0.018$ & $0.009$ & $0.010$ & $0.041$ & $0.040$ \\
ODMD \cite{Sinha2020} & $1.092$ & $2.130$ & $1.108$ & $1.117$ & $-$ & $-$ \\
N-RDMD & $4.173$ & $2.926$ & $3.198$ & $4.012$ & $1.058$ & $1.072$ \\ \hline
\end{tabular}
\label{tab.3}
\end{table}

\subsection{Gaussian and Non-Gaussian Noises}\label{sec.non-Gaussian}
The proposed RDMD has two important practical advantages: i) it can suppress the adverse effect of outliers, and ii) it is statistically efficient even if the residues do not follow a Gaussian distribution; see, e.g., \cite{Mili1996r, Netto2018, Netto2018b, Zhao2019}. In this section, we assess the performance of the proposed N-RDMD method in the presence of (a) Gaussian, (b) Laplace, (c) Student-$t$, and (d) Cauchy noise. The results are depicted in Fig. \ref{fig.10x}. The TDMD method performs best for all noise types, followed by the proposed N-RDMD method. 

The superior performance of the TDMD is attributed to the symmetric debiasing used by such a method, which makes it the best choice for noisy data. TDMD considers the uncertainties in both $\bm{Y}$, $\bm{Y}^{\prime}$, so the problem becomes $\bm{Y}^{\prime}+\Delta\bm{Y}^{\prime}=\bm{A} (\bm{Y}+\Delta\bm{Y})$, whereas other methods consider only the uncertainties on the right-hand side of the problem. This explains why the TDMD method performs best for noisy data and, to a certain extent, suppresses the effect of outliers in the estimation process. The combination of the TDMD and the proposed N-RDMD methods is promising and will be addressed in future research.

\subsection{Comparison with Robust Least-Trimmed Square Dynamic Mode Decomposition}
As discussed, a numerical method that is statistically robust was proposed in \cite{Askham2017}. This method works based on an LTS estimator; thus, we refer to it as LTS-RDMD. The discussion on the efficiency and applicability of the LTS estimators is given in the introduction. As stated, generalized maximum-likelihood estimators are easier to implement and faster to calculate as compared to LTS estimators. Consider a two-dimensional dynamical system governed by 
\begin{equation}
\dot{\bm{x}}\left( t \right)=\left[ \begin{matrix}
1 & -2 \\
1 & -1 \\
\end{matrix} \right]\bm{x}\left( t \right)
\label{eq_simple_oscillator}
\end{equation}
The measurement snapshots are contaminated with the additive deviation $\eta \bm{w}\left( t \right)+\mu \bm{s}\left( t \right)$, where $\bm{w}(t)$ is the Gaussian noise. Also, the elements of the vector $\bm{s}(t)$ are obtained by multiplying a Bernoulli trial with small expectation $p$ by a standard normal, which leads to a sparse noise; therefore, the snapshots are contaminated with a base Gaussian noise and some spikes of size $\mu$ with firing rate $p$ (\cite{Askham2017}). The comparison between our RDMD and the LTS-RDMD to discover the eigenvalues of the system is given in Fig. \ref{fig.11x}. Note that we repeated the simulations 200 times for each method. Simulations are performed for fixed ${\mu}=1$ and $p=0.05$ and for two different Gaussian noises levels: $\sigma=10^{-4}$ and $\sigma=10^{-3}$. It is observed that both methods can efficiently discover the true eigenvalues when the data are contaminated by the spikes (outliers). Also, in a number of simulations, the LTS-RDMD finds some erroneous eigenvalues, whereas the RDMD always finds them near the true value. Moreover, Fig. \ref{fig.11x} shows that the LTS-RDMD method acts more precisely for higher levels of the Gaussian noise. The price for such additive robustness is the higher implementational and computational complexity of the LTS-RDMD method. As stated in the previous subsection, the combination of the TDMD and the proposed RDMD methods could further improve the robustness in cases of more powerful Gaussian noise, and it will be addressed in our future research.

\begin{figure}[!t]
\centering
\subfloat[]{\includegraphics[width=0.25\textwidth]{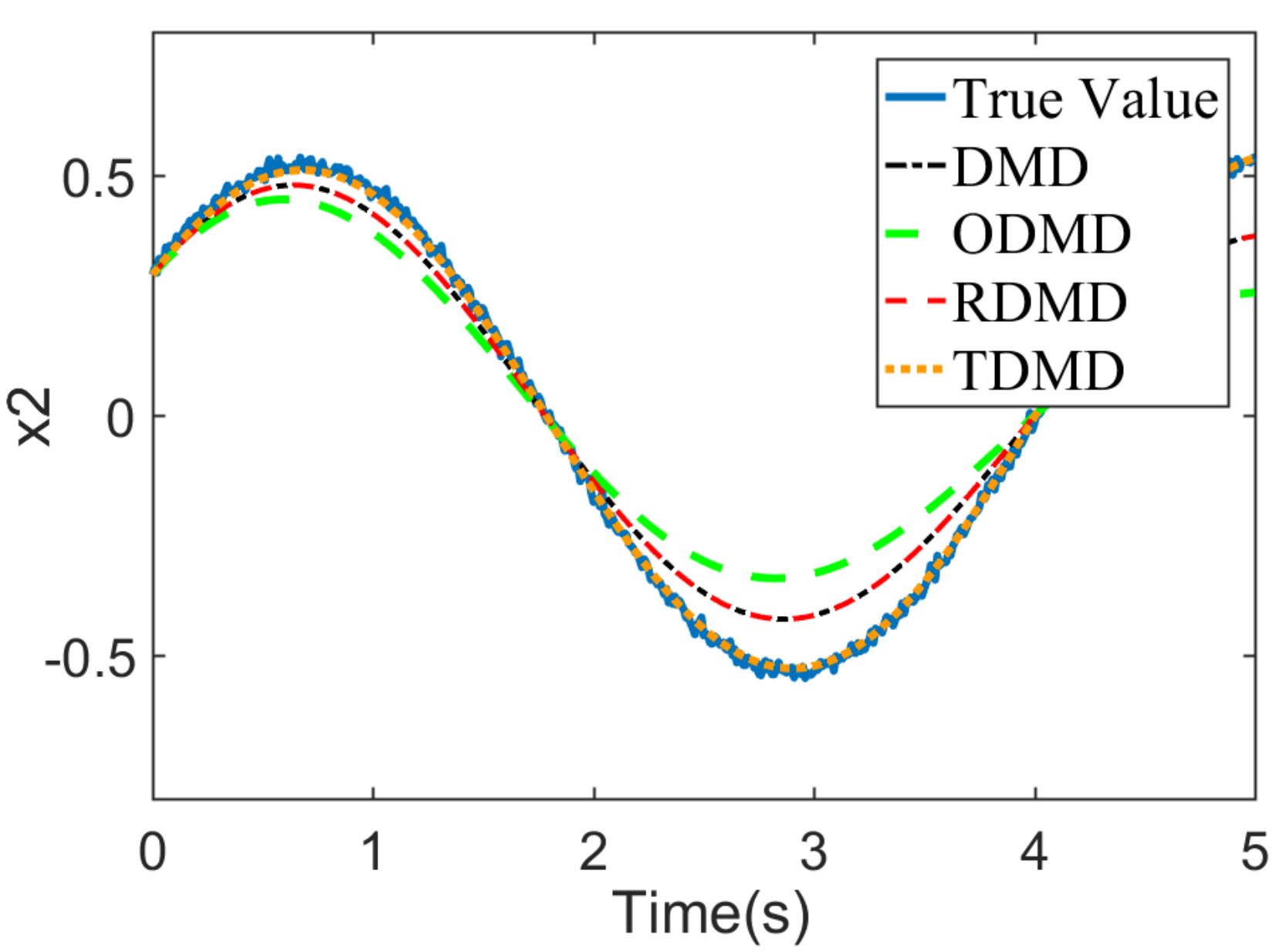}}
\subfloat[]{\includegraphics[width=0.25\textwidth]{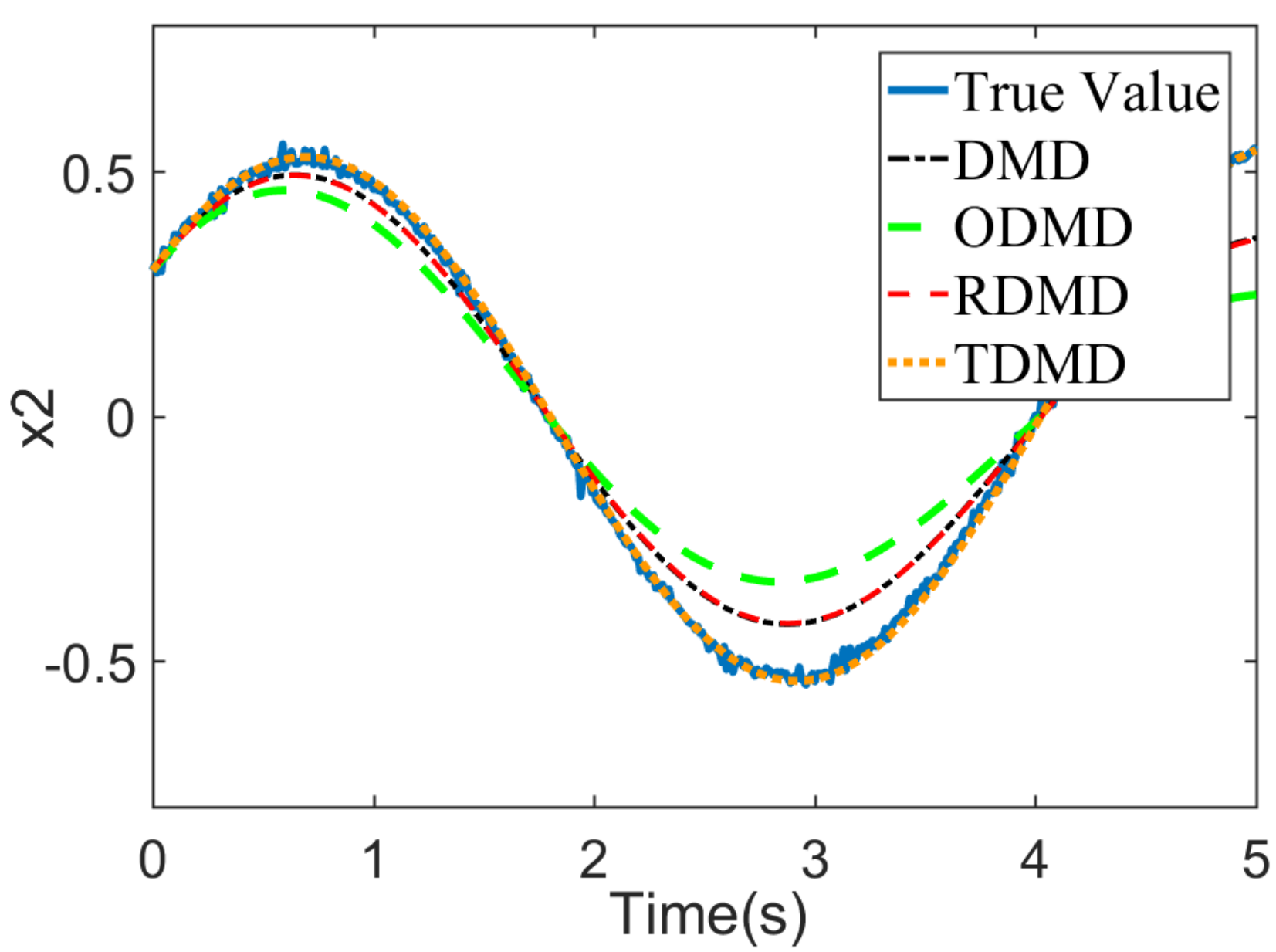}} \\
\subfloat[]{\includegraphics[width=0.25\textwidth]{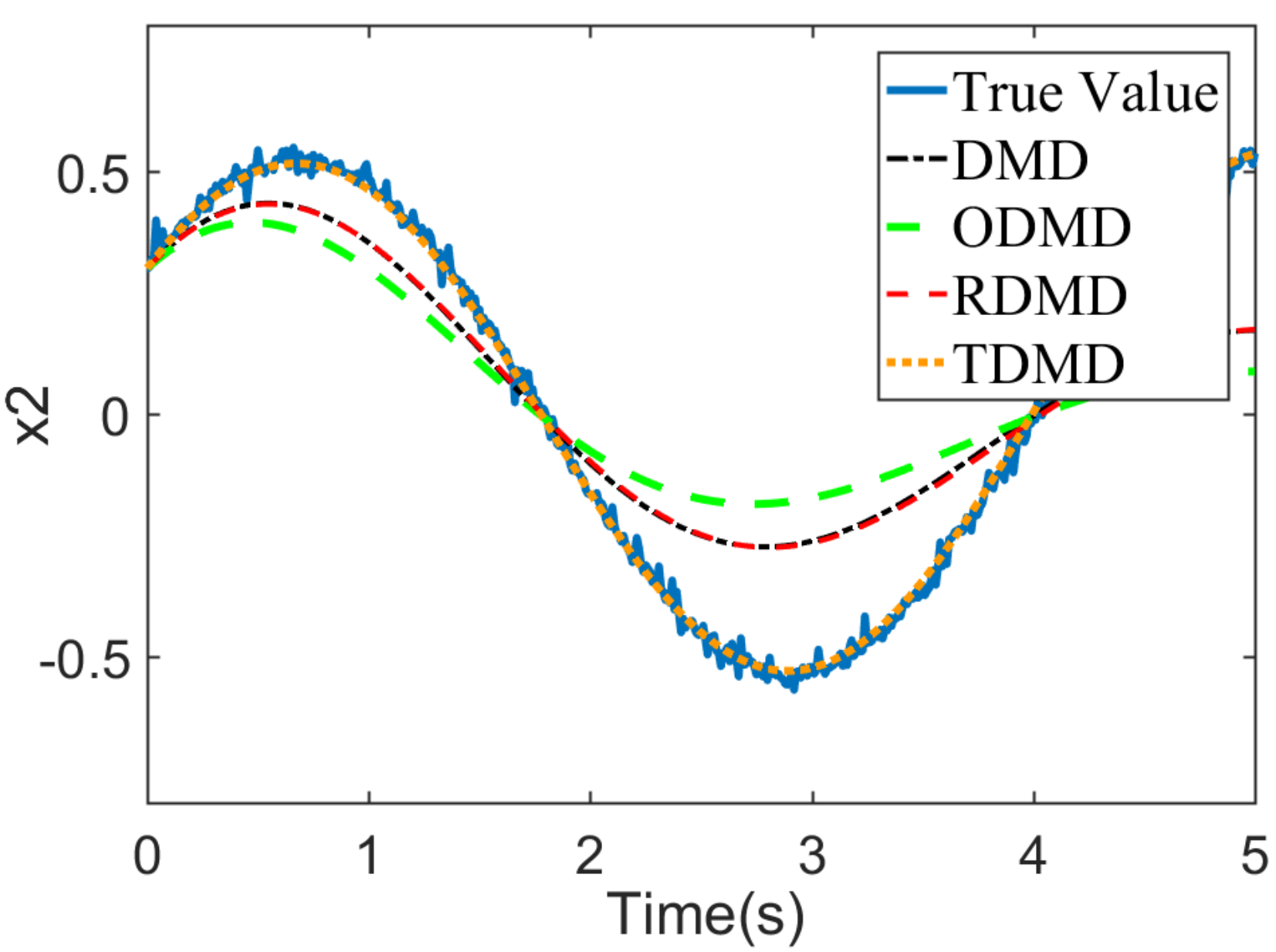}}
\subfloat[]{\includegraphics[width=0.25\textwidth]{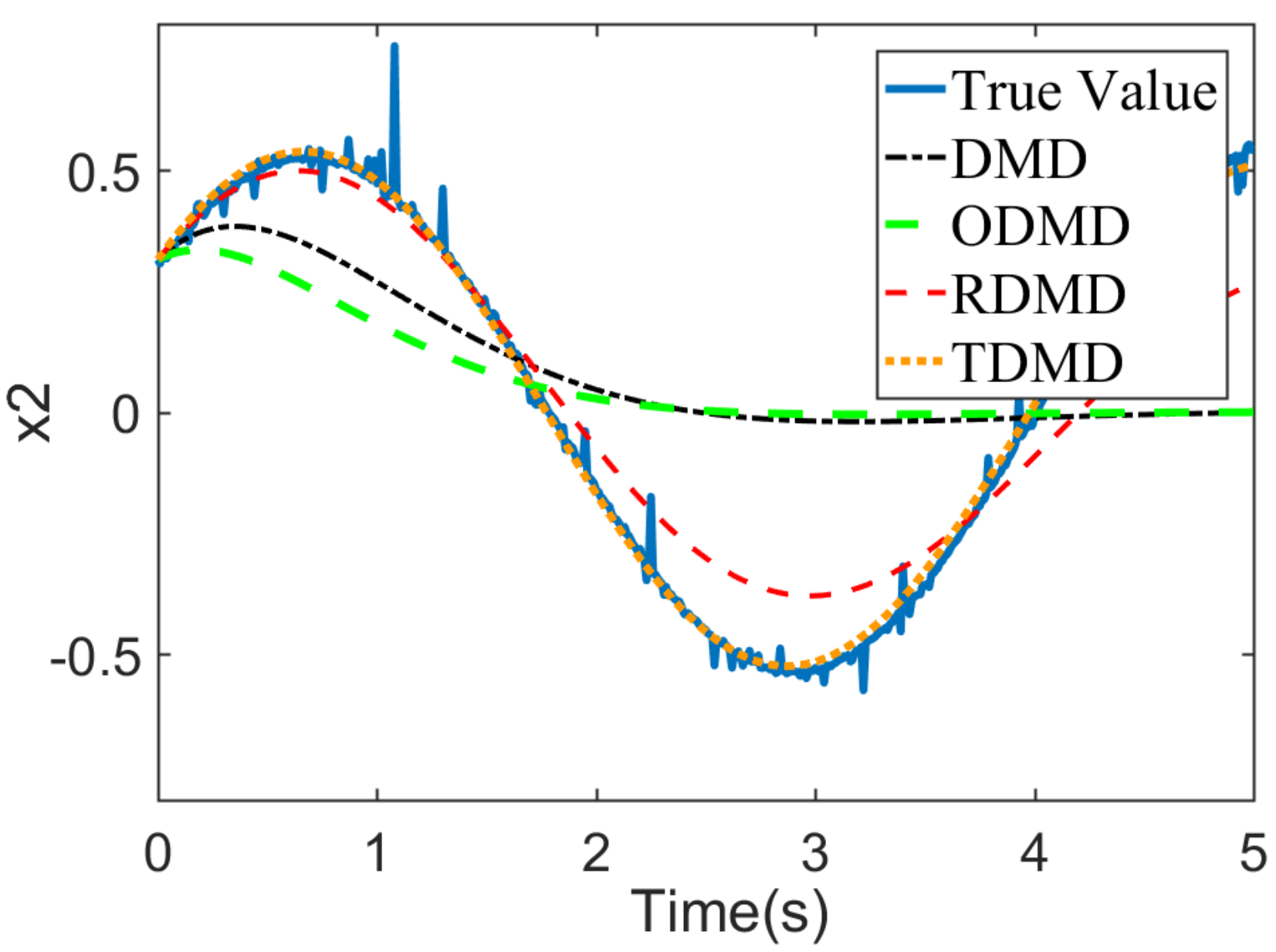}}
\caption{State reconstruction of linear system (\ref{eq.55x}) when the sampled data are contaminated with (a) Gaussian noise with variance of $0.01$; (b) Laplace noise with variance of $0.01$; (c) Student-t noise with $2$ degrees of freedom; (d) Cauchy noise with half-width at half-maximum $\gamma=2$.}
\label{fig.10x}
\end{figure}

\begin{figure}[!t]
\centering

\subfloat[]{\includegraphics[width=0.24\textwidth]{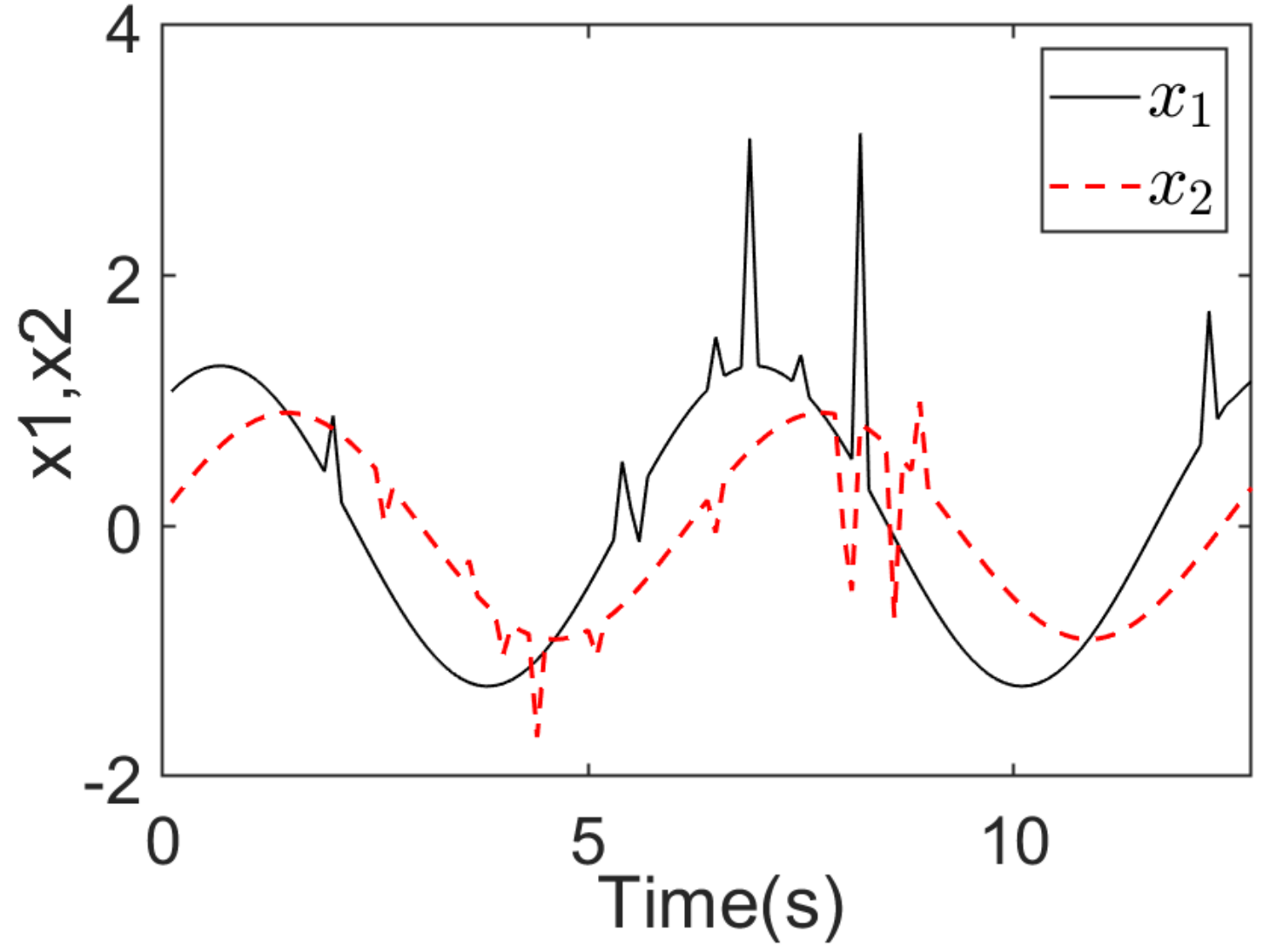}}
\subfloat[]{\includegraphics[width=0.24\textwidth]{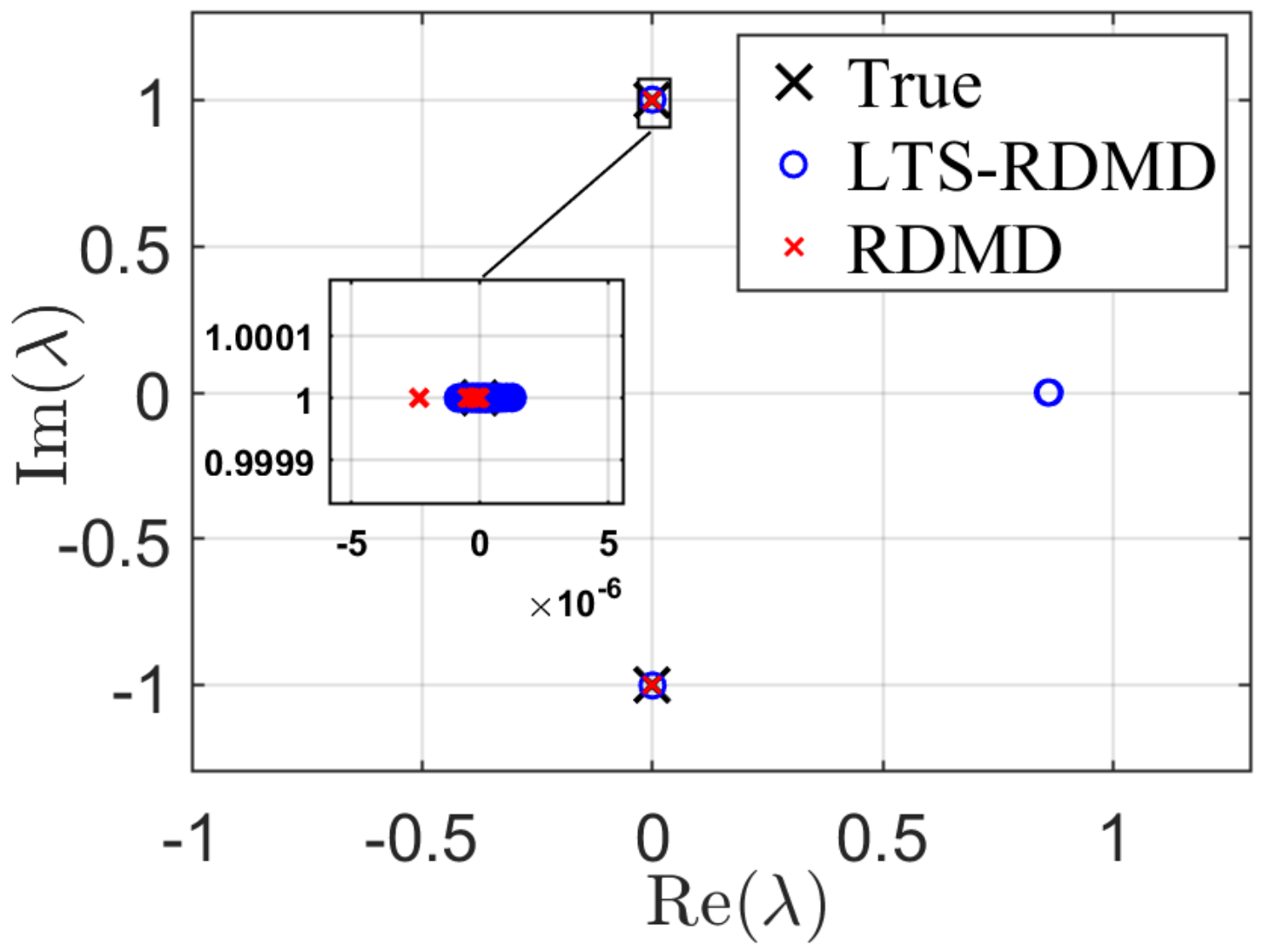}} \\
\subfloat[]{\includegraphics[width=0.24\textwidth]{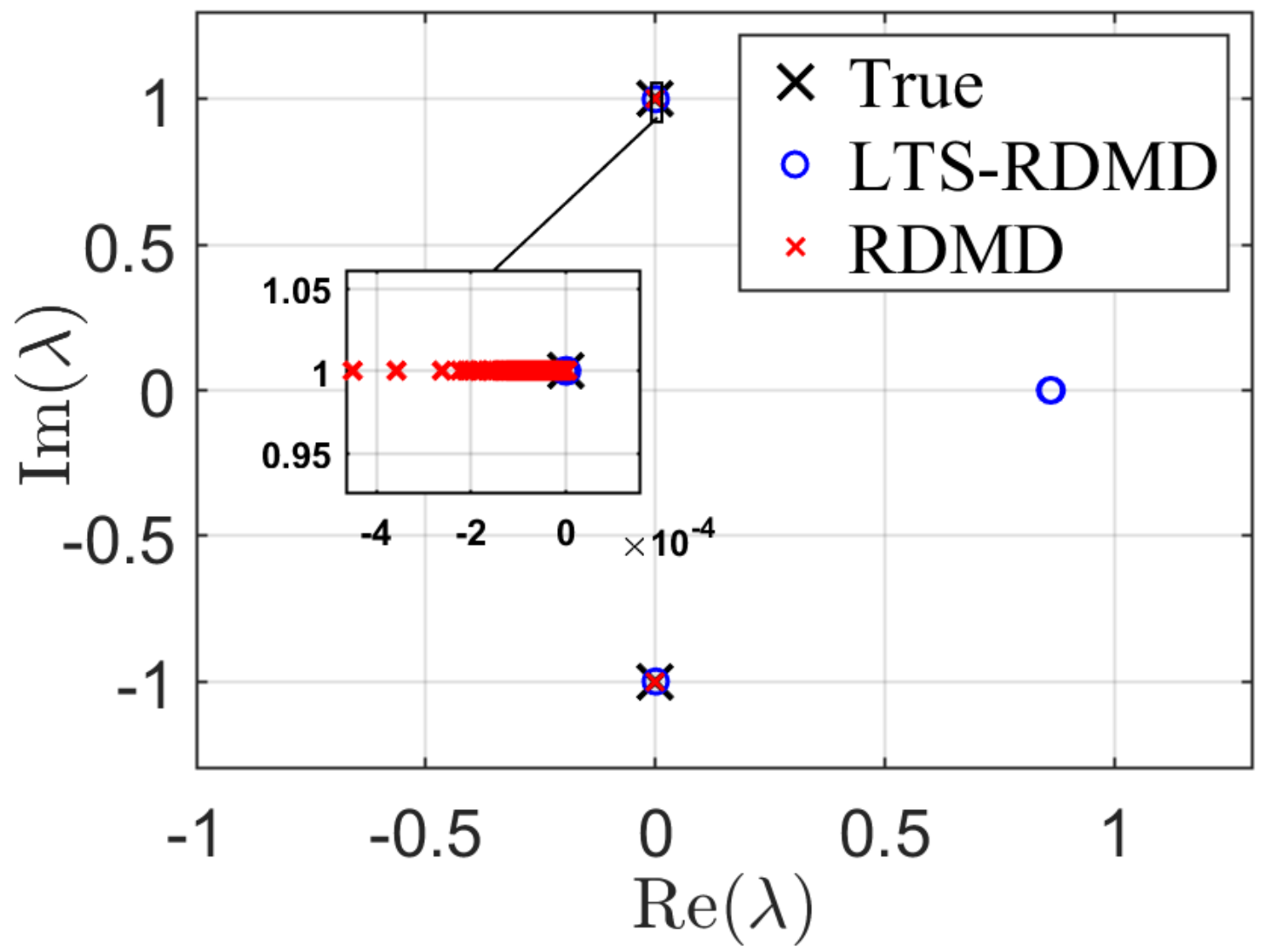}}
\subfloat[]{\includegraphics[width=0.24\textwidth]{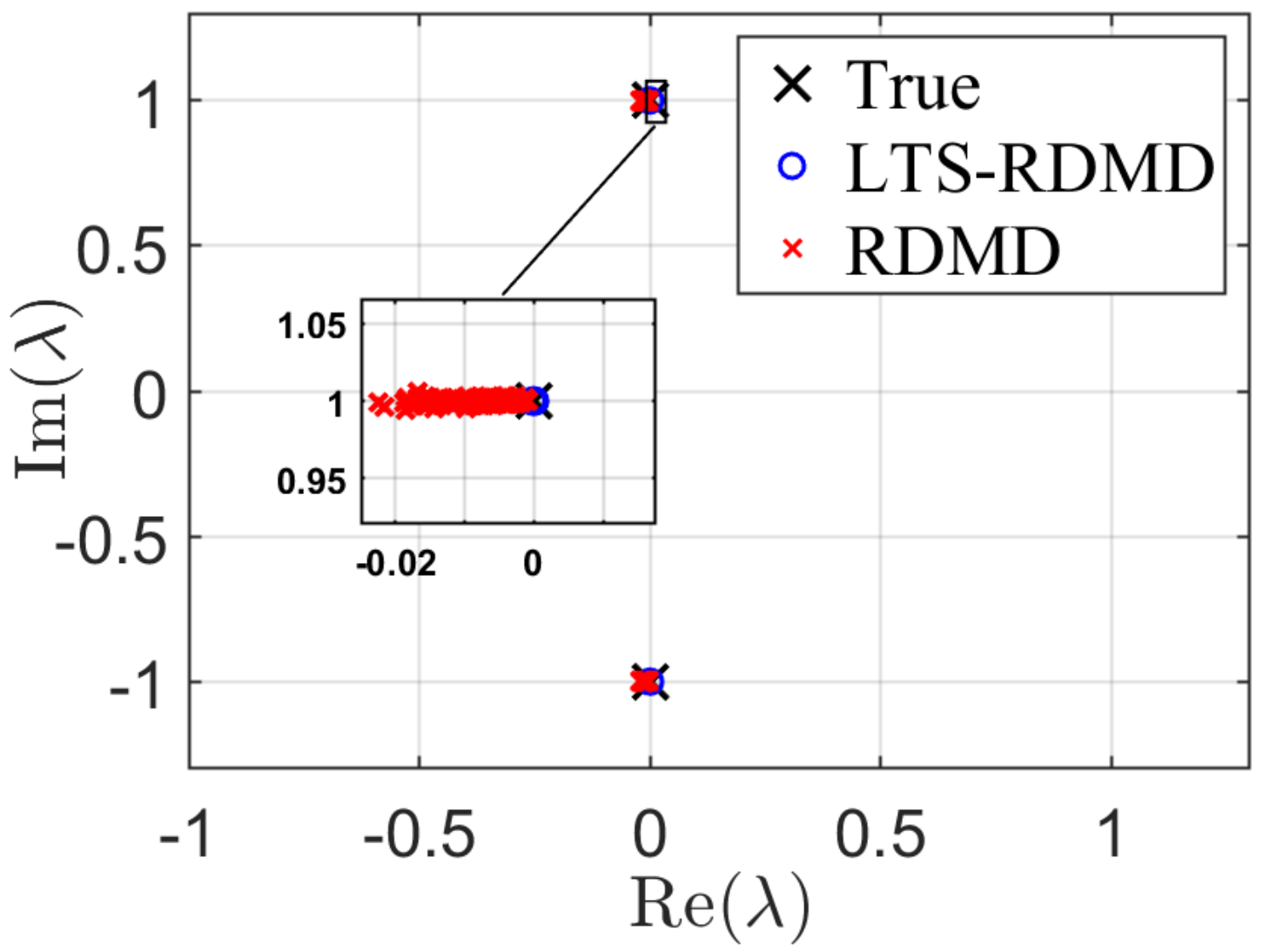}}
\caption{Calculation of eigenvalues by the RDMD and the LTS-RDMD for the simple oscillator (\ref{eq_simple_oscillator}) in the following cases: (a) the measured data with noise and spikes; 
(b) discovered eigenvalues for noise-free data with spike levels ${\mu}=1$, $p=0.05$; (c) discovered eigenvalues with spike levels ${\mu}=1$, $p=0.05$ and Gaussian noise level $\eta=10^{-4}$; (d) discovered eigenvalues with spike levels ${\mu}=1$, $p=0.05$ and Gaussian noise level $\eta=10^{-3}$}
\label{fig.11x}
\end{figure}

\section{Conclusions}
The problem of making the DMD robust to outliers is investigated. By casting the DMD problem in the robust statistics framework, it is solved by using a {\color{black}Schweppe-type Huber generalized maximum-likelihood} estimator. The numerical results demonstrated the effectiveness of the proposed RDMD for a variety of dynamical systems when the sampled data are contaminated with outliers. Further, the proposed RDMD presented satisfactory performance in dealing with non-Gaussian noises. Finally, we noticed that the numerical results are significantly improved by considering the symmetry of the problem; thus, we suggest the robustification of the total least-squares method as a direction for future research.

% \appendices 
% \section{}\label{app.A}
% \section*{Acknowledgments}

\bibliographystyle{IEEEtran}
% \bibliography{lib}

\begin{thebibliography}{10}
\providecommand{\url}[1]{#1}
\csname url@samestyle\endcsname
\providecommand{\newblock}{\relax}
\providecommand{\bibinfo}[2]{#2}
\providecommand{\BIBentrySTDinterwordspacing}{\spaceskip=0pt\relax}
\providecommand{\BIBentryALTinterwordstretchfactor}{4}
\providecommand{\BIBentryALTinterwordspacing}{\spaceskip=\fontdimen2\font plus
\BIBentryALTinterwordstretchfactor\fontdimen3\font minus
  \fontdimen4\font\relax}
\providecommand{\BIBforeignlanguage}[2]{{%
\expandafter\ifx\csname l@#1\endcsname\relax
\typeout{** WARNING: IEEEtran.bst: No hyphenation pattern has been}%
\typeout{** loaded for the language `#1'. Using the pattern for}%
\typeout{** the default language instead.}%
\else
\language=\csname l@#1\endcsname
\fi
#2}}
\providecommand{\BIBdecl}{\relax}
\BIBdecl

\bibitem{Hey2009}
T.~Hey, S.~Tansley, and K.~Tolle, \emph{The Fourth Paradigm: Data-Intensive
  Scientific Discovery}.\hskip 1em plus 0.5em minus 0.4em\relax Microsoft
  Research, 2009.

\bibitem{Brunton2019}
S.~L. Brunton and J.~N. Kutz, \emph{Data-Driven Science and Engineering:
  Machine Learning, Dynamical Systems, and Control}.\hskip 1em plus 0.5em minus
  0.4em\relax Cambridge University Press, 2019.

\bibitem{Rowley2009}
C.~W. Rowley, I.~Mezi{\'c}, S.~Bagheri, P.~Schlatter, and D.~S. Henningson,
  ``Spectral analysis of nonlinear flows,'' \emph{Journal of Fluid Mechanics},
  vol. 641, p. 115–127, 2009.

\bibitem{Kutz2016}
J.~N. Kutz, S.~L. Brunton, B.~W. Brunton, and J.~L. Proctor, \emph{Dynamic Mode
  Decomposition: Data-Driven Modeling of Complex Systems}.\hskip 1em plus 0.5em
  minus 0.4em\relax SIAM, 2016.

\bibitem{Mauroy2020}
A.~{Mauroy}, I.~{Mezi{\'c}}, and Y.~{Susuki (Editors)}, \emph{The Koopman
  Operator in Systems and Control: Concepts, Methodologies, and
  Applications}.\hskip 1em plus 0.5em minus 0.4em\relax Cham, Switzerland:
  Springer Nature Switzerland AG, 2020.

\bibitem{Schmid2008}
P.~Schmid and J.~Sesterhenn, ``Dynamic mode decomposition of numerical and
  experimental data,'' in \emph{Sixty-First Annual Meeting of the APS Division
  of Fluid Dynamics}, San Antonio, Texas, USA, 2008.

\bibitem{Schmid2010}
P.~J. Schmid, ``Dynamic mode decomposition of numerical and experimental
  data,'' \emph{Journal of Fluid Mechanics}, vol. 656, p. 5–28, 2010.

\bibitem{Susuki2011}
Y.~{Susuki} and I.~{Mezi{\'c}}, ``Nonlinear {K}oopman modes and coherency
  identification of coupled swing dynamics,'' \emph{IEEE Transactions on Power
  Systems}, vol.~26, no.~4, pp. 1894--1904, 2011.

\bibitem{Bingni2016}
B.~W. Brunton, L.~A. Johnson, J.~G. Ojemann, and J.~N. Kutz, ``Extracting
  spatial–temporal coherent patterns in large-scale neural recordings using
  dynamic mode decomposition,'' \emph{Journal of Neuroscience Methods}, vol.
  258, pp. 1--15, 2016.

\bibitem{Mann2016}
J.~Mann and J.~N. Kutz, ``Dynamic mode decomposition for financial trading
  strategies,'' \emph{Quantitative Finance}, vol.~16, no.~11, pp. 1643--1655,
  2016.

\bibitem{Kutz2016b}
J.~N. Kutz, X.~Fu, and S.~L. Brunton, ``Multiresolution dynamic mode
  decomposition,'' \emph{SIAM Journal on Applied Dynamical Systems}, vol.~15,
  no.~2, pp. 713--735, 2016.

\bibitem{Avila2020}
A.~M. Avila and I.~Mezi{\'c}, ``Data-driven analysis and forecasting of highway
  traffic dynamics,'' \emph{Nature Communications}, vol.~11, no.~1, p. 2090,
  2020.

\bibitem{Proctor2016}
J.~L. Proctor, S.~L. Brunton, and J.~N. Kutz, ``Dynamic mode decomposition with
  control,'' \emph{SIAM Journal on Applied Dynamical Systems}, vol.~15, no.~1,
  pp. 142--161, 2016.

\bibitem{Brunton2017}
S.~L. Brunton, B.~W. Brunton, J.~L. Proctor, E.~Kaiser, and J.~N. Kutz, ``Chaos
  as an intermittently forced linear system,'' \emph{Nature Communications},
  vol.~8, no.~1, p.~19, 2017.

\bibitem{Klus2018}
S.~Klus, P.~Gel{\ss}, S.~Peitz, and C.~Sch{\"u}tte, ``Tensor-based dynamic mode
  decomposition,'' \emph{Nonlinearity}, vol.~31, no.~7, pp. 3359--3380, jun
  2018.

\bibitem{Huber1964}
P.~J. Huber, ``Robust estimation of a location parameter,'' \emph{The Annals of
  Mathematical Statistics}, vol.~35, no.~1, pp. 73--101, 1964.

\bibitem{Wainer1976}
H.~Wainer, ``Robust statistics: A survey and some prescriptions,''
  \emph{Journal of Educational Statistics}, vol.~1, no.~4, pp. 285--312, 1976.

\bibitem{Rousseeuw2005}
P.~J. Rousseeuw and A.~M. Leroy, \emph{Robust Regression and Outlier
  Detection}, 1st~ed., ser. Wiley Series in Probability and Statistics.\hskip
  1em plus 0.5em minus 0.4em\relax Wiley, Feb. 2005.

\bibitem{Maronna2019}
R.~A. Maronna, R.~D. Martin, V.~J. Yohai, and M.~Salibi{\'a}n-Barrera,
  \emph{Robust Statistics: Theory and Methods (with R)}, 2nd~ed., ser. Wiley
  Series in Probability and Statistics.\hskip 1em plus 0.5em minus 0.4em\relax
  Wiley, Jan. 2019.

\bibitem{Wu2021}
Z.~Wu, S.~L. Brunton, and S.~Revzen, ``Challenges in dynamic mode
  decomposition,'' \emph{preprint, arXiv:2109.01710}, 2021.

\bibitem{Duke2012}
D.~Duke, J.~Soria, and D.~Honnery, ``An error analysis of the dynamic mode
  decomposition,'' \emph{Experiments in Fluids}, vol.~52, no.~2, pp. 529--542,
  2012.

\bibitem{Azencot2019}
O.~Azencot, W.~Yin, and A.~Bertozzi, ``Consistent dynamic mode decomposition,''
  \emph{SIAM Journal on Applied Dynamical Systems}, vol.~18, no.~3, pp.
  1565--1585, 2019.

\bibitem{Zhang2020}
H.~Zhang, S.~T.~M. Dawson, C.~W. Rowley, E.~A. Deem, and L.~N. Cattafesta,
  ``Evaluating the accuracy of the dynamic mode decomposition,'' \emph{Journal
  of Computational Dynamics}, vol.~7, no.~1, p.~35, 2020.

\bibitem{Lu2020}
H.~Lu and D.~M. Tartakovsky, ``Prediction accuracy of dynamic mode
  decomposition,'' \emph{SIAM Journal on Scientific Computing}, vol.~42, no.~3,
  pp. A1639--A1662, 2020.

\bibitem{Dawson2016}
S.~T.~M. Dawson, M.~S. Hemati, M.~O. Williams, and C.~W. Rowley,
  ``Characterizing and correcting for the effect of sensor noise in the dynamic
  mode decomposition,'' \emph{Experiments in Fluids}, vol.~57, no.~3, p.~42,
  2016.

\bibitem{Hemati2017}
M.~S. Hemati, C.~W. Rowley, E.~A. Deem, and L.~N. Cattafesta, ``De-biasing the
  dynamic mode decomposition for applied {K}oopman spectral analysis of noisy
  datasets,'' \emph{Theoretical and Computational Fluid Dynamics}, vol.~31,
  no.~4, pp. 349--368, 2017.

\bibitem{Markovsky2007}
I.~Markovsky and S.~{Van Huffel}, ``Overview of total least-squares methods,''
  \emph{Signal Processing}, vol.~87, no.~10, pp. 2283--2302, 2007, special
  Section: Total Least Squares and Errors-in-Variables Modeling.

\bibitem{Askham2017}
T.~Askham, P.~Zheng, A.~Aravkin, and J.~N. Kutz, ``Robust and scalable methods
  for the dynamic mode decomposition,'' \emph{preprint, arXiv:1712.01883v1},
  2017.

\bibitem{rousseeuw1985multivariate}
P.~J. Rousseeuw, ``Multivariate estimation with high breakdown point,''
  \emph{Mathematical statistics and applications}, vol.~8, no. 283-297, p.~37,
  1985.

\bibitem{Chen2012}
K.~K. Chen, J.~H. Tu, and C.~W. Rowley, ``Variants of dynamic mode
  decomposition: Boundary condition, {K}oopman, and {F}ourier analyses,''
  \emph{Journal of Nonlinear Science}, vol.~22, no.~6, pp. 887--915, 2012.

\bibitem{Sinha2020}
S.~Sinha, B.~Huang, and U.~Vaidya, ``On robust computation of {K}oopman
  operator and prediction in random dynamical systems,'' \emph{Journal of
  Nonlinear Science}, vol.~30, no.~5, pp. 2057--2090, 2020.

\bibitem{Mili1996}
L.~{Mili}, M.~G. {Cheniae}, N.~S. {Vichare}, and P.~J. {Rousseeuw}, ``Robust
  state estimation based on projection statistics [of power systems],''
  \emph{IEEE Transactions on Power Systems}, vol.~11, no.~2, pp. 1118--1127,
  1996.

\bibitem{Gandhi2010}
M.~A. {Gandhi} and L.~{Mili}, ``Robust {K}alman filter based on a generalized
  maximum-likelihood-type estimator,'' \emph{IEEE Transactions on Signal
  Processing}, vol.~58, no.~5, pp. 2509--2520, 2010.

\bibitem{Tu2014}
J.~H. Tu, C.~W. Rowley, D.~M. Luchtenburg, S.~L. Brunton, and J.~N. Kutz, ``On
  dynamic mode decomposition: Theory and applications,'' \emph{Journal of
  Computational Dynamics}, vol.~1, no.~2, pp. 391--421, 2014.

\bibitem{Mauroy2016}
A.~{Mauroy} and I.~{Mezi{\'c}}, ``Global stability analysis using the
  eigenfunctions of the {K}oopman operator,'' \emph{IEEE Transactions on
  Automatic Control}, vol.~61, no.~11, pp. 3356--3369, 2016.

\bibitem{Koopman1931}
B.~O. Koopman, ``Hamiltonian systems and transformation in {H}ilbert space,''
  \emph{Proceedings of the National Academy of Sciences}, vol.~17, no.~5, pp.
  315--318, 1931.

\bibitem{Rousseeuw1990}
P.~J. {Rousseeuw} and B.~C. {van Zomeren}, ``Unmasking multivariate outliers
  and leverage points,'' \emph{Journal of the American Statistical
  Association}, vol.~85, no. 411, pp. 633--639, 1990.

\bibitem{Rousseeuw1993}
P.~J. Rousseeuw and C.~Croux, ``Alternatives to the median absolute
  deviation,'' \emph{Journal of the American Statistical Association}, vol.~88,
  no. 424, pp. 1273--1283, 1993.

\bibitem{Donoho1982}
D.~L. Donoho, ``Breakdown properties of multivariate location estimators,''
  \emph{Qualifying paper, Dept. Statistics, Harvard University}, 1982.

\bibitem{Donoho1992}
D.~L. Donoho and M.~Gasko, ``Breakdown properties of location estimates based
  on halfspace depth and projected outlyingness,'' \emph{The Annals of
  Statistics}, vol.~20, no.~4, pp. 1803--1827, 1992.

\bibitem{Croux1992}
C.~Croux and P.~J. Rousseeuw, ``Time-efficient algorithms for two highly robust
  estimators of scale,'' in \emph{Computational Statistics}, Y.~Dodge and
  J.~Whittaker, Eds.\hskip 1em plus 0.5em minus 0.4em\relax Springer-Verlag,
  1992, vol.~1, pp. 411--428.

\bibitem{Gasko1982}
M.~Gasko and D.~L. Donoho, ``Influential observation in data analysis,''
  \emph{American Statistical Association Proceedings of the Business and
  Economic Statistics Section}, pp. 104--109, 1982.

\bibitem{Tikhonov1979}
R.~A. Willoughby, ``{Solutions of Ill-Posed Problems (A. N. Tikhonov and V. Y.
  Arsenin)},'' \emph{SIAM Review}, vol.~21, no.~2, pp. 266--267, 1979.

\bibitem{kaiser2021data}
E.~Kaiser, J.~N. Kutz, and S.~L. Brunton, ``Data-driven discovery of koopman
  eigenfunctions for control,'' \emph{Machine Learning: Science and
  Technology}, vol.~2, no.~3, p. 035023, 2021.

\bibitem{Mili1996r}
L.~Mili and C.~W. Coakley, ``Robust estimation in structured linear
  regression,'' \emph{The Annals of Statistics}, vol.~24, no.~6, pp.
  2593--2607, 1996.

\bibitem{Netto2018}
M.~Netto and L.~Mili, ``Robust {K}oopman operator-based {K}alman filter for
  power systems dynamic state estimation,'' in \emph{2018 IEEE Power and Energy
  Society General Meeting (PESGM)}, 2018, pp. 1--5.

\bibitem{Netto2018b}
M.~{Netto} and L.~{Mili}, ``A robust data-driven {K}oopman {K}alman filter for
  power systems dynamic state estimation,'' \emph{IEEE Transactions on Power
  Systems}, vol.~33, no.~6, pp. 7228--7237, 2018.

\bibitem{Zhao2019}
J.~Zhao and L.~Mili, ``A theoretical framework of robust {H}-infinity unscented
  {K}alman filter and its application to power system dynamic state
  estimation,'' \emph{IEEE Transactions on Signal Processing}, vol.~67, no.~10,
  pp. 2734--2746, 2019.

\end{thebibliography}
% Generated by IEEEtran.bst, version: 1.14 (2015/08/26)

\end{document}